\documentclass[10pt,journal,a4paper,twoside,twocolumn]{IEEEtran}
\usepackage{cite}
\usepackage{float}
\usepackage{stfloats}
\usepackage{fancyhdr}
\usepackage{color}
\usepackage[inline]{enumitem}
\usepackage{amsmath,amsfonts,amssymb}
\usepackage{graphicx}

\usepackage{algorithm}
\usepackage{algorithmic}
\usepackage{epsfig}
\usepackage{epstopdf}
\usepackage[caption=false]{subfig}
\usepackage{bm}
\usepackage{array}
\usepackage{bbding}
\usepackage{booktabs} 
\usepackage{multirow} 
\usepackage{pifont}

\usepackage{bbm}

\newtheorem{Lemma}{Lemma}
\newtheorem{Rem}{Remark}

\begin{document}
    \title{Shaping Delay-Doppler Ambiguity in Practical OFDM-ISAC}
	
	\author{\IEEEauthorblockN{
		Cheng Luo, \IEEEmembership{Member, IEEE}, Luping Xiang, \IEEEmembership{Senior Member, IEEE} and Kun Yang, \IEEEmembership{Fellow, IEEE}
		}
			\\
		\thanks{Cheng Luo, Luping Xiang and Kun Yang are with the State Key Laboratory of Novel Software Technology, Nanjing University, Nanjing, 210008, China, and also with the Institute of Intelligent Networks and Communications (NINE) and the School of Intelligent Software and Engineering, Nanjing University, Suzhou Campus, Suzhou 215163, China. (email: chengluo@nju.edu.cn; luping.xiang@nju.edu.cn; kunyang@nju.edu.cn).}
		\thanks{(Corresponding author: Luping Xiang.)}
	}

\maketitle
	
\thispagestyle{fancy} 
\lhead{} 
\chead{} 
\rhead{} 
\lfoot{} 
\cfoot{} 
\rfoot{\thepage} 
\renewcommand{\headrulewidth}{0pt} 
\renewcommand{\footrulewidth}{0pt} 
\pagestyle{fancy}

\rfoot{\thepage} 

\begin{abstract}
Orthogonal frequency-division multiplexing (OFDM) is a key waveform for integrated sensing and communication (ISAC). Existing OFDM ambiguity analyses, however, typically assume fully occupied data-only waveforms, whereas practical frames contain direct-current and edge-guard nulls, fixed pilots, and random payload symbols. This mixed resource structure reshapes the self-ambiguity function and induces prominent sidelobes in the sensing region of interest (ROI). We therefore propose ROI-oriented deep block-unitary precoded OFDM (DBU-OFDM), which combines resource-specific trainable unitary transformations with dedicated sensing subcarriers to suppress ROI sidelobes while preserving the prescribed resource support. We develop a constraint-preserving parameterization capable of representing arbitrary unitary matrices and an ROI-aware sensing-support initialization. We prove periodic autocorrelation function (P-ACF) invariance under phase-only optimization and zero-delay Doppler-cut invariance under arbitrary unitary transformations. Numerical results corroborate the P-ACF optimality of cyclic-prefix OFDM (CP-OFDM) and provide numerical support for the conjecture that conventional OFDM is globally optimal in the considered aperiodic autocorrelation function (A-ACF) setting. Unitary pilot optimization improves the ROI peak-to-sidelobe ratio by over 2 dB, while combining it with dedicated sensing subcarriers yields gains ranging from several to tens of dB. Under the considered configuration, the A-ACF mode also provides substantially greater ROI ambiguity-shaping capability than the P-ACF mode.
\end{abstract}

\begin{IEEEkeywords}
	Orthogonal frequency-division multiplexing (OFDM), integrated sensing and communication (ISAC), ambiguity function, region of interest (ROI), unitary waveform design.
\end{IEEEkeywords}

\section{Introduction}\label{sec:I}

\IEEEPARstart{I}{ntegrated} sensing and communication (ISAC) is an important capability of sixth-generation wireless systems, where the wireless infrastructure is expected to provide communication together with environmental sensing, localization, and tracking. By sharing spectrum, hardware, and transmitted signals, ISAC supports these functions within a unified wireless platform \cite{wang2023road,liu2022integrated,cui2021integrating,chiriyath2017radar}. Its waveform and signal-processing foundations have therefore received extensive attention \cite{zhou2022waveform,sturm2011waveform}.

Among the candidate ISAC waveforms, orthogonal frequency-division multiplexing (OFDM) is attractive because of its spectral efficiency, flexible resource allocation, and mature fast Fourier transform (FFT)-based transceiver architecture. Early OFDM sensing studies established joint radar-communication transmission and derived range and velocity estimation from the received OFDM symbols \cite{sturm2009ofdm,braun2010maximum}. The corresponding ambiguity function characterizes the delay-Doppler resolution and sidelobe structure \cite{he2012waveform}.

Building on OFDM, subsequent ISAC studies have investigated beamforming, transceiver design, and time-frequency resource allocation. For beamforming design, Liu et al. designed dual-functional multiple-input multiple-output waveforms (MIMO) by jointly considering the radar beampattern and downlink multiuser interference \cite{liu2018toward}. Their subsequent study summarized the principal radar-communication design approaches and developed a hybrid analog-digital transceiver for joint target search and communication-channel estimation \cite{liu2020joint}. Wei et al. formulated communication, sensing, and joint MIMO-OFDM waveform designs using mutual information \cite{wei2023waveform}. Moreover, He et al. projected an additional sensing signal onto the communication-channel null space and optimized its autocorrelation sidelobes \cite{he2024dual}.

In addition to beamforming and transceiver design, the OFDM time-frequency resources provide further waveform variables for ISAC. Bica et al. investigated multicarrier waveform optimization for radar-communication convergence \cite{bica2019multicarrier}. Shi et al. jointly allocated subcarriers and power \cite{shi2019joint} and further considered subcarrier selection in multicarrier dual-functional systems \cite{shi2021joint}. Chen et al. optimized subcarrier placement and power according to ranging sidelobes, communication rate, and range resolution \cite{chen2023joint}. This resource-allocation framework was extended to channel uncertainty and sensing quality-of-service requirements in \cite{cao2023robust,dong2023sensing}. Other studies optimized the phase perturbations \cite{zhou2019joint}, time-frequency power coefficients under limited feedforward \cite{keskin2021limited}, and the ambiguity cells within a prescribed region of interest (ROI) \cite{zhang2024cross}. 

The sensing optimality of OFDM has also been studied analytically. Under quadrature amplitude modulation (QAM) and phase-shift keying (PSK), Liu et al. proved that cyclic-prefix OFDM (CP-OFDM) minimizes the expected periodic autocorrelation function (P-ACF) sidelobe power at every nonzero delay and is a local optimum for the expected integrated aperiodic autocorrelation function (A-ACF) sidelobe power \cite{liu2025cp}. Zhang et al. extended this analysis to discrete two-dimensional ambiguity functions and showed that the optimal basis depends on the ambiguity definition and symbol distribution \cite{zhang2025discreteaf}.

As the waveform and receiver variables become increasingly coupled, learning-based methods provide an alternative means of optimizing ISAC systems. Cammerer et al. developed trainable communication systems and demonstrated their practical implementation \cite{cammerer2020trainable}. Ait Aoudia et al. jointly learned the transmit and receive filters, constellation, labeling, and detector under communication waveform constraints \cite{ait2022waveform}. Qi et al. jointly learned an uplink sensing waveform and communication receive beamforming \cite{qi2024deep}. Moreover, Jiang et al. unfolded passive sensing, signal detection, and channel reconstruction into a model-driven network \cite{jiang2024isac}, while Mateos-Ramos et al. developed model-based end-to-end learning for multi-target ISAC under hardware impairments \cite{mateos2025model}.

Against this background, three coupled challenges remain. First, many analytical studies focus on one-dimensional correlation functions. Although the discrete two-dimensional ambiguity function has also been analyzed, directly controlling the largest coherent sidelobe within a finite delay-Doppler ROI remains difficult. Second, the fully occupied resource structures used in the theoretical analyses differ from practical OFDM frames containing fixed pilots, edge and direct-current (DC) nulls, and random payload symbols, as shown in Fig.~\ref{fig:OFDMgframe}. These prescribed resources reshape the sidelobes and introduce frame-dependent variations. Third, the sensing-resource budget and support locations determine the waveform-shaping freedom, delay-correlation structure, and remaining communication payload. Therefore, the sensing support, sensing symbols, and continuous waveform transformation need to be considered within a unified formulation.

In our preceding work, deep block-unitary precoded OFDM (DBU-OFDM) introduced trainable block-unitary transformations within prescribed resource groups while preserving resource isolation and traditional communication processing \cite{luo2026dbu}. In this paper, we extend this structure to two-dimensional ROI ambiguity shaping over the complete practical OFDM resource configuration. Specifically, the resource-block transformations, sensing-subcarrier support, and sensing symbols are jointly optimized to suppress the largest self-ambiguity sidelobe within the selected delay-Doppler ROI while maintaining the prescribed resource structure.

The main contributions of this paper are summarized as follows.
\begin{itemize}
    \item \textbf{Two-dimensional ROI-oriented sensing formulation.} We establish P-ACF and A-ACF models for CP-OFDM and zero-guard operation and formulate the largest normalized sidelobe over a prescribed delay-Doppler ROI. 

    \item \textbf{Resource-structured DBU-OFDM waveform design.} We develop a DBU-OFDM waveform that combines resource-block unitary transformations with dedicated sensing-subcarrier allocation, thereby providing additional degrees of freedom for sensing and ROI sidelobe shaping. The construction preserves resource isolation, signal energy, exact invertibility, and DFT-based single-tap equalization under the stated conditions. Moreover, we rigorously establish the complete representation condition $K_{\vartheta}\geq N_{\vartheta}-1$, under which the Householder parameterization represents any $N_{\vartheta}\times N_{\vartheta}$ unitary resource block exactly.

    \item \textbf{Ambiguity-guided sensing-subcarrier allocation.} We decompose the complete self-ambiguity function into the sensing/pilot, communication, and cross-ambiguity terms. Based on the support-dependent sensing/pilot term, we formulate an array-factor criterion that provides an informed and ROI-aware sensing-support initialization. The support is then refined jointly with the sensing symbols and unitary resource blocks using the complete ambiguity loss, which effectively suppresses the sidelobes within the selected ROI.

    \item \textbf{Analytical properties and numerical insights.} We rigorously prove the P-ACF invariance under phase-only design and the zero-delay Doppler-cut invariance under arbitrary DBU-OFDM transformations. Under the fully occupied resource structure, the learned transformations recover an OFDM-equivalent monomial form. This result numerically confirms the established P-ACF optimality of CP-OFDM and provides evidence that conventional OFDM may also attain the global optimum of the considered zero-Doppler A-ACF delay-slice PSLR problem within the unitary waveform class. Moreover, the numerical results show that A-ACF provides greater ROI sidelobe-shaping freedom than P-ACF under the considered configuration.
\end{itemize}

The remainder of this paper is organized as follows. Section~\ref{sec:II} establishes the OFDM signal model and formulates the ROI-oriented sidelobe suppression problem. Section~\ref{sec:III} develops the resource-structured DBU-OFDM waveform and sensing-subcarrier allocation. Section~\ref{sec:IV} presents the numerical results. Finally, Section~\ref{sec:conclusion} concludes the paper.

\emph{Notation:} For any positive integer $Q$, we define $\mathbb{Z}_{Q}\triangleq\{0,\ldots,Q-1\}$. For any integer $a$, $a\bmod Q$ denotes the nonnegative remainder of $a$ upon division by $Q$ and therefore belongs to $\mathbb{Z}_{Q}$. The notations $[\cdot]_a$ and $[\cdot]_{a,b}$ denote the $a$-th entry of a vector and the $(a,b)$-th entry of a matrix, respectively. For a matrix $\mathbf{A}$, $[\mathbf{A}]_{\bm{\mathcal{I}},\bm{\mathcal{J}}}$ denotes the submatrix indexed by the row set $\bm{\mathcal{I}}$ and column set $\bm{\mathcal{J}}$. The operators $(\cdot)^T$, $(\cdot)^H$, and $(\cdot)^*$ denote the transpose, conjugate transpose, and complex conjugate, respectively. The notation $|\cdot|$ denotes scalar magnitude or set cardinality, according to its argument, while $\|\cdot\|_F$ and $\mathbb{E}[\cdot]$ denote the Frobenius norm and expectation, respectively. The imaginary unit is denoted by $j\triangleq\sqrt{-1}$.

\begin{figure}
	\centering
	\includegraphics[width=0.85\linewidth]{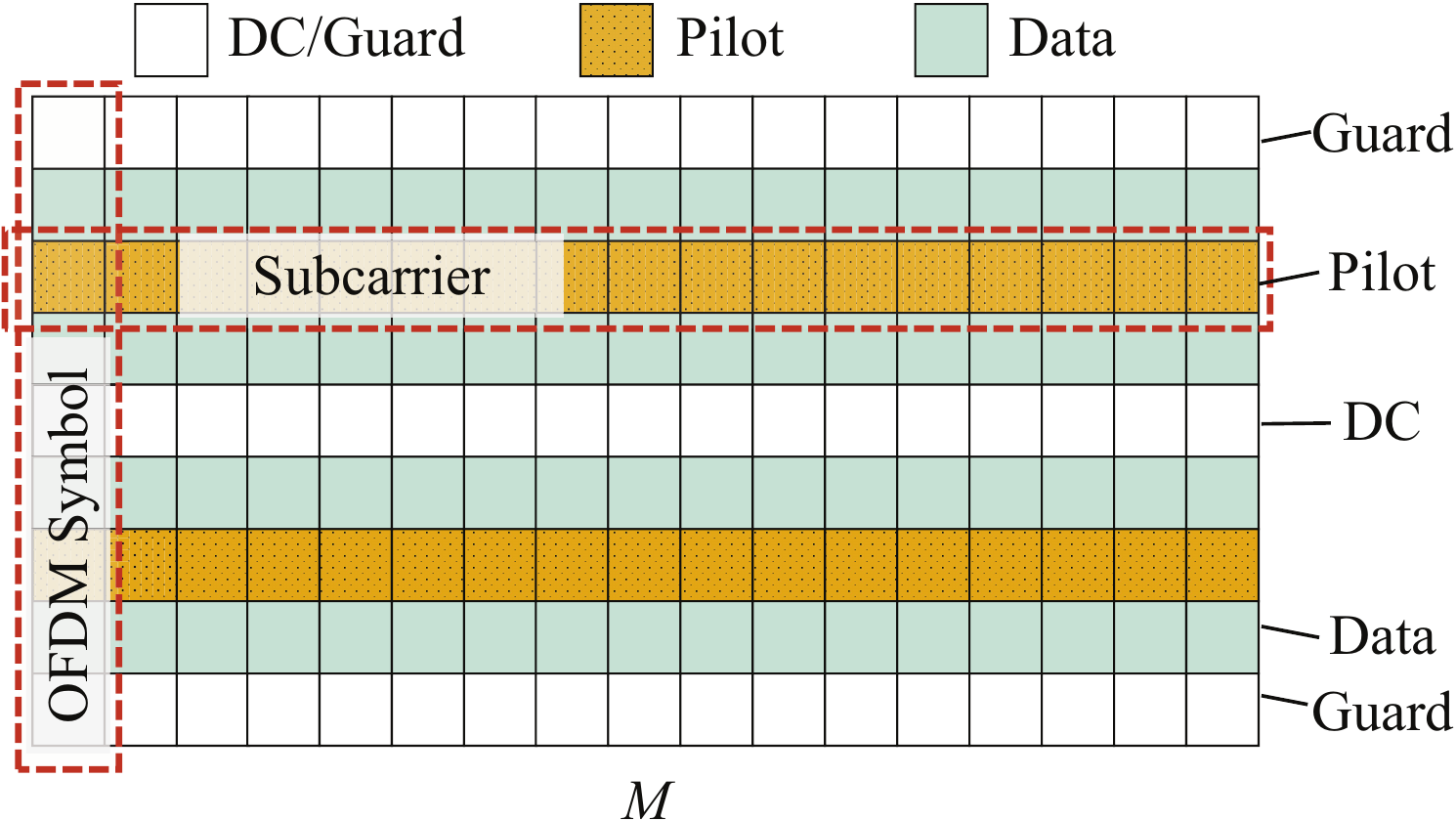}
	\caption{Considered OFDM system with comb-type pilots.}
	\label{fig:OFDMgframe}
\end{figure}

\section{System Model and Problem Formulation}\label{sec:II}
In this section, we establish the OFDM signal model, characterize its delay-Doppler ambiguity under P-ACF and A-ACF, and formulate the ROI-oriented sidelobe suppression problem.

\subsection{OFDM Frame and Resource Model}\label{sec:II-A}

We consider a conventional OFDM system in which each frame contains $M$ consecutive OFDM symbols and each OFDM symbol occupies $N$ subcarriers with comb-type pilots, as shown in Fig. \ref{fig:OFDMgframe}. The subcarrier and OFDM-symbol indices are denoted by $n\in\mathbb{Z}_{N}\triangleq\{0,\ldots,N-1\}$ and $m\in\mathbb{Z}_{M}\triangleq\{0,\ldots,M-1\}$, respectively. The physical subcarriers are partitioned into the data set $\bm{\mathcal{K}}_{d}$, the pilot set $\bm{\mathcal{K}}_{p}$, and the null set $\bm{\mathcal{K}}_{0}$, satisfying
\begin{equation}
\begin{aligned}
    &\bm{\mathcal{K}}_{d}\cup\bm{\mathcal{K}}_{p}\cup\bm{\mathcal{K}}_{0}
     =\mathbb{Z}_{N},\\
    &\bm{\mathcal{K}}_{d}\cap\bm{\mathcal{K}}_{p}
     =\bm{\mathcal{K}}_{d}\cap\bm{\mathcal{K}}_{0}
      =\bm{\mathcal{K}}_{p}\cap\bm{\mathcal{K}}_{0}
      =\varnothing.
\end{aligned}                                                     \label{eq:resource_partition}
\end{equation}
The null set contains $N_g$ guard subcarriers at each band edge and $N_{dc}$ subcarriers around direct current (DC), whereas $N_p=|\bm{\mathcal{K}}_{p}|$ comb-type pilots are used for channel estimation. Thus, the numbers of null and data subcarriers are
\begin{align}
   &|\bm{\mathcal{K}}_{0}|= N_0=2N_g+N_{dc},\nonumber\\
   &|\bm{\mathcal{K}}_{d}|= N_d=N-N_p-N_0.                    \label{eq:resource_cardinalities}
\end{align}

Let $S_{n,m}$ denote the symbol carried by the $n$-th subcarrier of the $m$-th OFDM symbol, and define the frequency-domain OFDM symbol vector as
\begin{align}
    \mathbf{s}_m
    =[S_{0,m},\ldots,S_{N-1,m}]^{T}\in\mathbb{C}^{N\times 1}.
                                                                    \label{eq:ofdm_symbol_vector}
\end{align}
The frequency-domain frame is then expressed as 
\begin{align}
    \mathbf{S}
    =[\mathbf{s}_0,\ldots,\mathbf{s}_{M-1}]
    \in\mathbb{C}^{N\times M},                                  \label{eq:freq_frame}
\end{align}
and its $(n,m)$-th entry is given by
\begin{align}
    &S_{n,m}
    =
    \begin{cases}
        D_{n,m}, & n\in\bm{\mathcal{K}}_{d},\\
        P_{n,m}, & n\in\bm{\mathcal{K}}_{p},\\
        0,       & n\in\bm{\mathcal{K}}_{0},
    \end{cases}                                                     \label{eq:resource_symbols}
\end{align}
$D_{n,m}$ is drawn from a QAM constellation normalized to unit average energy, $P_{n,m}$ is a known pilot symbol, and the entries indexed by $\bm{\mathcal{K}}_{0}$ are zero.

Let $\Delta f$ denote the subcarrier spacing. The sampling interval and useful OFDM-symbol duration are $T_s=1/(N\Delta f)$ and $T_u=NT_s=1/\Delta f$, respectively. Define the normalized $N$-point discrete Fourier transform (DFT) matrix by $[\mathbf{F}_{N}]_{a,b}=N^{-1/2}e^{-j2\pi ab/N}$ for $a,b\in\mathbb{Z}_{N}$. The useful time-domain signal of the $m$-th OFDM symbol is
\begin{align}
    \mathbf{x}_m=\mathbf{F}_{N}^{H}\mathbf{s}_m
    \in\mathbb{C}^{N\times 1}.                                  \label{eq:ofdm_modulation}
\end{align}
Collecting the $M$ time-domain OFDM symbols gives
\begin{align}
    \mathbf{X}
    =[\mathbf{x}_0,\ldots,\mathbf{x}_{M-1}]
    =\mathbf{F}_{N}^{H}\mathbf{S}
    \in\mathbb{C}^{N\times M}.                                  \label{eq:ofdm_frame}
\end{align}
The $n$-th useful time-domain sample of the $m$-th OFDM symbol is denoted by $x_{n,m}=[\mathbf{x}_m]_n$, where $n\in\mathbb{Z}_{N}$. We formulate the sensing model over these $N$ useful samples. The cyclic prefix (CP) produces a cyclic extension of the useful block, whereas the zero guard produces a zero extension.

\subsection{Operating-Mode-Dependent Self-Ambiguity Functions}\label{sec:II-B}

To characterize the delay-domain sidelobes, we first introduce the one-dimensional periodic autocorrelation function (P-ACF) and aperiodic autocorrelation function (A-ACF) \cite{liu2025cp}. For the $m$-th time-domain OFDM symbol $\mathbf{x}_m$, their values at a nonnegative delay index $k\in\mathbb{Z}_N$ are defined as
\begin{align}
    &r_{\mathrm{P},m}[k]
    =\sum_{n=0}^{N-1}x_{n,m}^{*}x_{(n+k)\bmod N,m},          \label{eq:p_acf_1d}\\
    &r_{\mathrm{A},m}[k]
    =\sum_{n=0}^{N-1-k}x_{n,m}^{*}x_{n+k,m}.                 \label{eq:a_acf_1d}
\end{align}
The P-ACF in Eq. \eqref{eq:p_acf_1d} circularly wraps the delayed sample index and therefore contains $N$ sample pairs, whereas the A-ACF in Eq. \eqref{eq:a_acf_1d} contains only the $N-k$ overlapping sample pairs.

\begin{figure}
	\centering
	\includegraphics[width=0.998\linewidth]{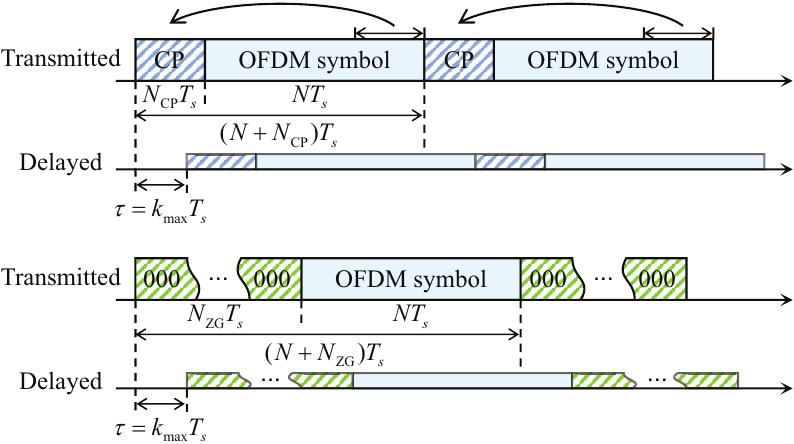}
	    \vspace{-10pt}
    {
	\caption{Representative physical waveform scenarios corresponding to the P-ACF and A-ACF models. The upper diagram shows continuously transmitted CP-OFDM blocks and their delayed replicas, whereas the lower diagram shows CP-free OFDM pulses separated by zero guards and their delayed replicas.}
	\label{fig:OFDMZGCP}
    }

\end{figure}

The P-ACF and A-ACF definitions correspond to two representative physical waveform scenarios, as illustrated in Fig.~\ref{fig:OFDMZGCP}. The P-ACF in Eq. \eqref{eq:p_acf_1d} corresponds to continuously transmitted CP-OFDM blocks. Let $N_{\mathrm{CP}}$ denote the CP length in samples. The resulting block repetition interval is $T_{\mathrm{P}}=(N+N_{\mathrm{CP}})T_s$. When the largest modeled round-trip delay does not exceed $N_{\mathrm{CP}}$ samples, CP removal converts each delayed useful block into a circular shift of the transmitted block, which agrees with the periodic indexing in Eq. \eqref{eq:p_acf_1d}. The A-ACF in Eq. \eqref{eq:a_acf_1d} corresponds to CP-free OFDM pulses separated by time-domain zero guards.\footnote{CP-OFDM is generally preferred when its CP covers the sensing delay range. For a much larger range, extending the CP incurs substantial time and transmit-energy overhead. CP-free OFDM with a zero guard avoids transmitting redundant samples during the guard interval, at the cost of a lower duty cycle.} Let $N_{\mathrm{ZG}}$ denote the zero-guard length in samples, which gives the pulse repetition interval $T_{\mathrm{A}}=(N+N_{\mathrm{ZG}})T_s$. When the zero guard covers the largest modeled round-trip delay, each return is represented by a linear shift of the zero-extended useful block without overlapping the subsequent pulse. This finite-overlap structure agrees with the aperiodic summation in Eq. \eqref{eq:a_acf_1d}. Increasing $N_{\mathrm{ZG}}$ accommodates a larger delay range at the cost of a lower transmission duty cycle.

Let $T_{\chi}$ denote $T_{\mathrm{P}}$ or $T_{\mathrm{A}}$ for mode $\chi\in\{\mathrm{P},\mathrm{A}\}$, respectively, and define the centered Doppler-index set $\mathbb{Z}_M^{\mathrm{c}}\triangleq\{-\lfloor M/2\rfloor,\ldots,\lceil M/2\rceil-1\}$. Let $k_{\max}\in\mathbb{Z}_N$ denote the largest round-trip delay index within the modeled sensing range. We restrict the modeled delays to integer sample shifts, neglect range migration over the $M$ blocks, and require $k_{\max}\leq N_{\mathrm{CP}}$ in the P-ACF mode or $k_{\max}\leq N_{\mathrm{ZG}}$ in the A-ACF mode. Let $q_{\max}^{(\chi)}\in\{0,\ldots,\lfloor M/2\rfloor\}$ denote the largest absolute Doppler index within this range. On the $M$-point slow-time grid, the corresponding maximum absolute physical Doppler frequency is $f_{\mathrm{D},\max}^{(\chi)}=q_{\max}^{(\chi)}/(MT_{\chi})$. We impose the worst-case small-intrasymbol-Doppler condition $\phi_{\mathrm{intra},\max}^{(\chi)}\triangleq2\pi f_{\mathrm{D},\max}^{(\chi)}T_u\ll1$, which becomes
\begin{align}
    &\phi_{\mathrm{intra},\max}^{(\mathrm{P})}=
      \dfrac{2\pi q_{\max}^{(\mathrm{P})}N}
            {M(N+N_{\mathrm{CP}})}\ll1, \nonumber\\
    &\phi_{\mathrm{intra},\max}^{(\mathrm{A})}=
      \dfrac{2\pi q_{\max}^{(\mathrm{A})}N}
            {M(N+N_{\mathrm{ZG}})}\ll1.                    \label{eq:small_doppler}
\end{align}
This condition gives $e^{j2\pi f_{\mathrm{D}}nT_s}\simeq1$ for every modeled $|f_{\mathrm{D}}|\leq f_{\mathrm{D},\max}^{(\chi)}$, so the Doppler phase is treated as constant over each useful block and as evolving only across the slow-time blocks. The intrasymbol Doppler and resulting intercarrier interference (ICI) are therefore negligible.

Let $r_{\chi,m}[k]$ denotes the corresponding per-symbol ACF. Coherently combining these ACFs across the $M$ OFDM symbols gives the two-dimensional self-ambiguity function{\footnote{The two-dimensional P-ACF ambiguity function defined in this subsection corresponds to the periodic fast-slow-time ambiguity function (FST-AF) in \cite{zhang2025discreteaf}. The two-dimensional A-ACF ambiguity function is its natural aperiodic extension.}}
\begin{align}
    &\Gamma_{\chi}[k,q]
    =\sum_{m=0}^{M-1}r_{\chi,m}[k]e^{-j2\pi qm/M}, \nonumber\\
    &(k,q)\in\mathbb{Z}_N\times\mathbb{Z}_M^{\mathrm{c}}, \quad\chi\in\{\mathrm{P},\mathrm{A}\}          \label{eq:self_af}
\end{align}

The indices $k$ and $q$ in Eq. \eqref{eq:self_af} denote the relative discrete delay and centered Doppler lags, respectively. Their corresponding physical offsets are $kT_s$ and $q/(MT_{\chi})$. The relative Doppler index $q$ is distinct from $q_{\max}^{(\chi)}$, which bounds the absolute physical Doppler range used to validate the small-intrasymbol-Doppler approximation. In both operating modes, the zero-delay, zero-Doppler value equals the waveform energy
\begin{align}
    &\Gamma_{\chi}[0,0]
    =\sum_{m=0}^{M-1}\sum_{n=0}^{N-1}|x_{n,m}|^2
    =\|\mathbf{X}\|_{F}^{2},  \quad\chi\in\{\mathrm{P},\mathrm{A}\}.                            \label{eq:af_origin}
\end{align}

Note that matched filtering maps each target to a translated replica of the self-ambiguity kernel. The Doppler translation is circular modulo $M$, whereas the delay translation is circular modulo $N$ in the P-ACF mode and remains locally valid within the finite correlation window in the A-ACF mode. For multiple targets, the superposition of these replicas implies that suppressing the kernel sidelobes within a prescribed ROI reduces the masking of weaker targets by stronger-target sidelobes. The sensing objective can therefore be formulated solely in terms of the self-ambiguity kernel without explicitly introducing individual target parameters.
\subsection{ROI-Oriented PSLR Objective}\label{sec:II-C}

The waveform design controls the largest self-ambiguity sidelobe within a prescribed delay-Doppler region of interest (ROI). Let $\Omega_{\mathrm{ROI}}\subseteq\mathbb{Z}_N\times\mathbb{Z}_M^{\mathrm{c}}$ denote the set of relative delay-Doppler index pairs $(k,q)$ selected within the sensing range. The corresponding sidelobe evaluation set is
\begin{align}
    &\Omega_{\mathrm{SL}}
    =\Omega_{\mathrm{ROI}}\setminus\{(0,0)\},
    \qquad \Omega_{\mathrm{SL}}\neq\varnothing.                 \label{eq:roi_sets}
\end{align}
If $(0,0)\notin\Omega_{\mathrm{ROI}}$, then we have $\Omega_{\mathrm{SL}}=\Omega_{\mathrm{ROI}}$. For such an ROI, we still use $\Gamma_{\chi}[0,0]$ as the mainlobe reference.

Using the positive mainlobe-to-sidelobe convention, the ROI peak-to-sidelobe ratio (PSLR) is defined as
\begin{align}
    &\mathrm{PSLR}_{\mathrm{ROI}}^{(\chi)}(\mathbf{X})
    =20\log_{10}
      \frac{|\Gamma_{\chi}[0,0]|}
           {\displaystyle
            \max_{(k,q)\in\Omega_{\mathrm{SL}}}
            |\Gamma_{\chi}[k,q]|}.              \label{eq:roi_pslr}
\end{align}

A larger value of the PSLR in Eq. \eqref{eq:roi_pslr} indicates stronger suppression of the largest self-ambiguity sidelobe within the selected ROI.

Note that maximizing the PSLR in Eq. \eqref{eq:roi_pslr} is equivalent to minimizing the largest normalized sidelobe power over $\Omega_{\mathrm{SL}}$. Since the maximum operation over $\Omega_{\mathrm{SL}}$ is nonsmooth, we replace it with a log-sum-exp approximation parameterized by $\alpha>0$. The resulting differentiable loss is given by
\begin{align}
    &\mathcal{L}_{\chi}(\mathbf{X})
    =\frac{1}{\alpha}\log
      \left[
        \sum_{(k,q)\in\Omega_{\mathrm{SL}}}
        \exp\!\left(
          \alpha\frac{|\Gamma_{\chi}[k,q]|^{2}}
          {|\Gamma_{\chi}[0,0]|^{2}}
        \right)
      \right].                                                   \label{eq:roi_loss}
\end{align}

As $\alpha$ increases, Eq. \eqref{eq:roi_loss} approaches the maximum normalized sidelobe power over $\Omega_{\mathrm{SL}}$. Minimizing this loss therefore provides a smooth approximation to maximizing the PSLR in Eq. \eqref{eq:roi_pslr}.

\section{Proposed ROI-Oriented DBU-OFDM}\label{sec:III}
\subsection{Impact of Practical OFDM Resource Structures on Self-Ambiguity}\label{sec:III-A}

Under QAM/PSK signaling, OFDM minimizes the expected P-ACF sidelobes and locally minimizes the expected integrated A-ACF sidelobes\cite{liu2025cp}. These results characterize the average delay-domain performance under a fully occupied OFDM resource structure with random data symbols. However, practical null resources alter the occupied spectral support, while fixed pilots introduce a deterministic pattern among the random communication symbols. These resource constraints can therefore reshape the two-dimensional self-ambiguity surface. Fig.~\ref{fig:acf_resource_effect} compares the P-ACF and A-ACF modes with and without these fixed resources.

\begin{figure*}[!t]
    \centering
    \captionsetup[subfloat]{width=0.2\textwidth,justification=centering,singlelinecheck=false}
    \subfloat[P-ACF with all subcarriers carrying data.]{%
        \includegraphics[height=0.25\textwidth]{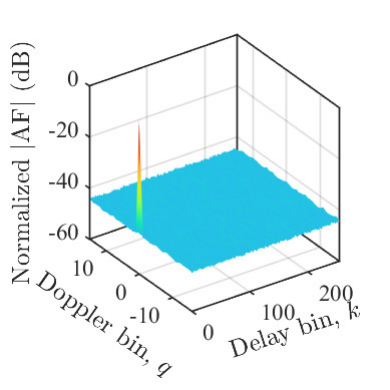}%
        \label{fig:pacf_full_data}}
      \hspace{-50pt}
    \hfill
    \subfloat[P-ACF with null and pilot subcarriers.]{%
        \includegraphics[height=0.25\textwidth]{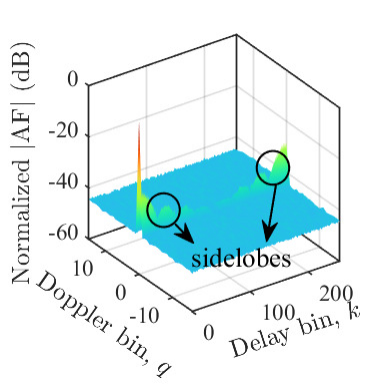}%
        \label{fig:pacf_null_pilot}}
    \hfill
    \hspace{-50pt}
    \subfloat[A-ACF with all subcarriers carrying data.]{%
        \includegraphics[height=0.25\textwidth]{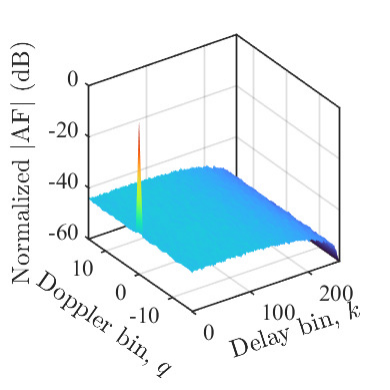}%
        \label{fig:aacf_full_data}}
    \hfill
    \hspace{-50pt}
    \subfloat[A-ACF with null and pilot subcarriers.]{%
        \includegraphics[height=0.25\textwidth]{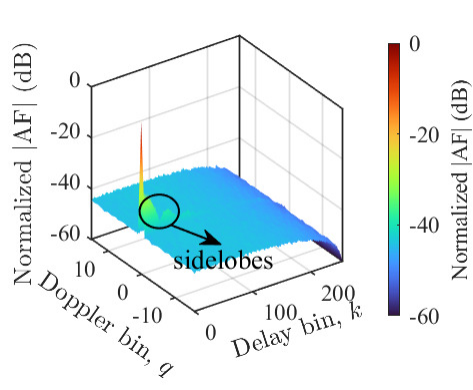}%
        \label{fig:aacf_null_pilot}}
    \caption{Normalized two-dimensional self-ambiguity magnitudes of conventional OFDM under the P-ACF and A-ACF modes with different OFDM resource structures. Each surface shows $20\log_{10}\!\left(|\Gamma_{\chi}[k,q]|/|\Gamma_{\chi}[0,0]|\right)$.}
    \label{fig:acf_resource_effect}
\end{figure*}

Specifically, when all subcarriers carry data, both correlation modes exhibit a dominant mainlobe and a comparatively smooth sidelobe surface. The P-ACF in Eq. \eqref{eq:p_acf_1d} retains circular overlap over the entire delay interval and therefore has a more uniform sidelobe floor along the delay dimension. In contrast, the A-ACF surface decreases toward large delay indices because the number of overlapping sample pairs in Eq. \eqref{eq:a_acf_1d} decreases with the delay.

Moreover, introducing the edge and DC nulls together with the comb-type pilots produces delay-structured sidelobe ridges around zero Doppler in both correlation modes, as shown in Fig.~\ref{fig:pacf_null_pilot} and Fig.~\ref{fig:aacf_null_pilot}. This effect is more pronounced in the P-ACF example, where elevated sidelobes extend to larger delay indices. The comparison demonstrates that the prescribed subcarrier resources can substantially reshape the two-dimensional self-ambiguity function. Thus, an ROI-oriented waveform design must account for the complete resource-constrained frame and directly shape its self-ambiguity sidelobes over the selected ROI.

\subsection{Resource-Structured DBU-OFDM Architecture}\label{sec:III-B}

\subsubsection{Resource-Structured Block-Unitary Construction}

The comparison in Fig.~\ref{fig:acf_resource_effect} shows that fixed pilot and null resources reshape the two-dimensional self-ambiguity function and produce elevated sidelobes over parts of the delay-Doppler plane. Since their locations are prescribed, the waveform-shaping operation needs to preserve the subcarrier support, resource isolation, signal energy, exact invertibility, and conventional DFT-based processing. To meet these requirements, DBU-OFDM applies a trainable frequency-domain transformation before the inverse DFT (IDFT). This construction reuses the conventional resource-mapping and CP/ZG processing modules. When the CP covers the channel delay spread and the intrasymbol channel variation and intercarrier interference (ICI) are negligible, DBU-OFDM also retains DFT-based single-tap equalization.

According to Eq. \eqref{eq:resource_symbols}, we collect the data symbols over $\bm{\mathcal{K}}_{d}$ and the pilot symbols over $\bm{\mathcal{K}}_{p}$ in physical subcarrier order. The corresponding resource subvectors are given by
\begin{align}
    &\mathbf{s}_{d,m}
    =[\mathbf{s}_{m}]_{\bm{\mathcal{K}}_{d}}
    \in\mathbb{C}^{N_d\times 1},\qquad
    \mathbf{s}_{p,m}
    =[\mathbf{s}_{m}]_{\bm{\mathcal{K}}_{p}}
    \in\mathbb{C}^{N_p\times 1}.                              \label{eq:resource_subvectors}
\end{align}

We then introduce a fixed binary permutation matrix $\mathbf{\Pi}\in\{0,1\}^{N\times N}$ that maps the resource-grouped order to the physical subcarrier order. It satisfies $\mathbf{\Pi}^{T}\mathbf{\Pi}=\mathbf{\Pi}\mathbf{\Pi}^{T}=\mathbf{I}_{N}$. Its transpose maps the interleaved physical resources into the grouped vector, which is given by
\begin{align}
    &\mathbf{s}_{g,m}
    =\mathbf{\Pi}^{T}\mathbf{s}_{m}
    =\begin{bmatrix}
        \mathbf{s}_{d,m}^{T} &
        \mathbf{s}_{p,m}^{T} &
        \mathbf{0}_{N_0}^{T}
      \end{bmatrix}^{T}.                                      \label{eq:grouped_resource_vector}
\end{align}
The permutation only reorders the entries and therefore preserves their values and energy. Let $\mathbf{U}_{d}\in\mathbb{C}^{N_d\times N_d}$, $\mathbf{U}_{p}\in\mathbb{C}^{N_p\times N_p}$, and $\mathbf{U}_{0}\in\mathbb{C}^{N_0\times N_0}$ denote the transformations within the data, pilot, and null subspaces, respectively. The grouped transformation is
\begin{align}
    &\mathbf{U}_{g}
    =\operatorname{blkdiag}\!\left(
        \mathbf{U}_{d},\mathbf{U}_{p},\mathbf{U}_{0}
      \right).                                                  \label{eq:grouped_unitary}
\end{align}
To preserve signal energy and exact invertibility within the occupied subspaces while bypassing the null resources, we impose
\begin{align}
    &\mathbf{U}_{d}^{H}\mathbf{U}_{d}=\mathbf{I}_{N_d},\qquad
      \mathbf{U}_{p}^{H}\mathbf{U}_{p}=\mathbf{I}_{N_p},\qquad
      \mathbf{U}_{0}=\mathbf{I}_{N_0}.                           \label{eq:resource_block_constraints}
\end{align}
These conditions give $\mathbf{U}_{g}^{H}\mathbf{U}_{g}=\mathbf{I}_{N}$, while the block-diagonal form prevents mixing among the data, pilot, and null resources. Mapping the grouped transformation back to the physical subcarrier order gives
\begin{align}
    &\mathbf{U}
    =\mathbf{\Pi}\mathbf{U}_{g}\mathbf{\Pi}^{T}.              \label{eq:full_dbu_unitary}
\end{align}
Applying this transformation to the complete frame gives the physical-order frequency-domain frame $\bar{\mathbf{S}}\in\mathbb{C}^{N\times M}$ as
\begin{align}
    &\bar{\mathbf{S}}
    =\mathbf{U}\mathbf{S}
    =\mathbf{\Pi}\mathbf{U}_{g}\mathbf{\Pi}^{T}\mathbf{S}. \label{eq:dbu_frequency_frame}
\end{align}
The IDFT then produces the transmitted useful waveform $\bar{\mathbf{X}}\in\mathbb{C}^{N\times M}$ according to
\begin{align}
    &\bar{\mathbf{X}}
    =\mathbf{F}_{N}^{H}\bar{\mathbf{S}}
    =\mathbf{F}_{N}^{H}
      \mathbf{\Pi}\mathbf{U}_{g}\mathbf{\Pi}^{T}\mathbf{S}.    \label{eq:dbu_frame}
\end{align}
Each column of $\bar{\mathbf{X}}$ is a useful time-domain DBU-OFDM block before CP or ZG insertion. Let $\bar{\mathbf{s}}_m$ denote the $m$-th column of $\bar{\mathbf{S}}$. Its grouped resource vector is
\begin{align}
    &\mathbf{\Pi}^{T}\bar{\mathbf{s}}_{m}
    =\begin{bmatrix}
        (\mathbf{U}_{d}\mathbf{s}_{d,m})^{T} &
        (\mathbf{U}_{p}\mathbf{s}_{p,m})^{T} &
        \mathbf{0}_{N_0}^{T}
      \end{bmatrix}^{T}.                                      \label{eq:dbu_resource_action}
\end{align}
Thus, the transformed data and effective pilots remain on their prescribed subcarriers, while the DC and edge-guard subcarriers remain zero. The full-size transformation is unitary since
\begin{align}
    &\mathbf{U}^{H}\mathbf{U}
    =\mathbf{\Pi}\mathbf{U}_{g}^{H}\mathbf{U}_{g}\mathbf{\Pi}^{T}
    =\mathbf{I}_{N}.                                           \label{eq:dbu_unitarity}
\end{align}
Consequently, the transformation preserves signal energy and is exactly inverted by $\mathbf{U}^{H}$. The receiver knows the trained $\mathbf{U}_{p}$ and prescribed pilots, and therefore the effective pilots $\mathbf{U}_{p}\mathbf{s}_{p,m}$. Moreover, setting $\mathbf{U}_{p}=\mathbf{I}_{N_p}$ retains the original pilot sequence, while also setting $\mathbf{U}_{d}=\mathbf{I}_{N_d}$ recovers conventional OFDM.

\subsubsection{Communication Processing}

\begin{figure}
	\centering
	\includegraphics[width=0.998\linewidth]{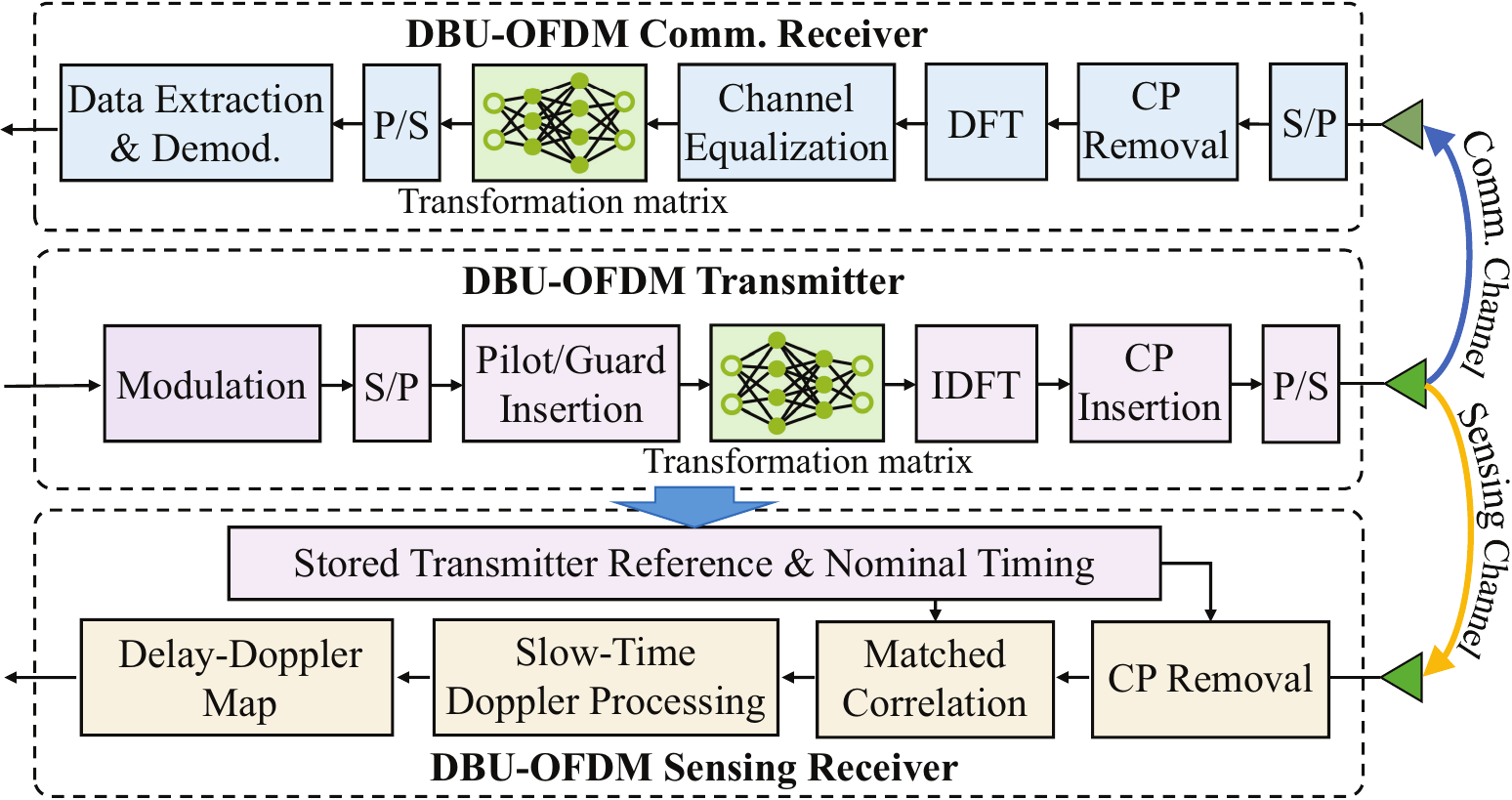}
	    \vspace{-10pt}
    {
	\caption{Communication and sensing processing of the proposed DBU-OFDM waveform. The shared transmitter applies the resource-structured unitary transform before the IDFT. In the communication branch, DFT-based single-tap equalization is followed by the inverse unitary transform. The sensing branch performs matched correlation with the known transmitted waveform under the periodic or aperiodic delay model.}
	\label{fig:dbu_transceiver_pipeline}
    }
\end{figure}

Fig.~\ref{fig:dbu_transceiver_pipeline} illustrates the shared DBU-OFDM transmitter and the corresponding communication and sensing receiver operations. For DBU-OFDM communication with a CP\footnote{For CP-free communication with a zero guard, receiver-side overlap-and-add (OLA) converts the linear channel convolution into an equivalent $N$-point circular convolution, thereby yielding the same DFT-based single-tap equalization structure as the CP-based branch \cite{1175490,4098993}.}, assume that the CP covers the channel delay spread, the channel remains approximately time invariant over one useful OFDM symbol, and the intrasymbol Doppler variation and ICI are negligible. Under these assumptions, the received useful block $\mathbf{y}_{m}\in\mathbb{C}^{N\times 1}$ after CP removal is
\begin{align}
    &\mathbf{y}_{m}
    =\mathbf{H}_{\mathrm{circ},m}\mathbf{F}_{N}^{H}
      \mathbf{U}\mathbf{s}_{m}+\mathbf{w}_{m},                 \label{eq:dbu_received_time}
\end{align}
where $\mathbf{H}_{\mathrm{circ},m}\in\mathbb{C}^{N\times N}$ and $\mathbf{w}_{m}\in\mathbb{C}^{N\times 1}$ denote the corresponding circulant channel matrix and additive noise, respectively. Under the stated CP condition, $\mathbf{H}_{\mathrm{circ},m}$ represents the circular convolution of the channel with the transmitted DBU-OFDM block. Note that applying the unitary transformation before the IDFT does not alter the circulant channel model established by the CP.

Applying the DFT to Eq. \eqref{eq:dbu_received_time} gives
\begin{align}
    &\mathbf{r}_{m}
    =\mathbf{F}_{N}\mathbf{y}_{m}
    =\mathbf{\Lambda}_{m}\mathbf{U}\mathbf{s}_{m}
      +\mathbf{w}_{m}^{(f)},                                  \label{eq:dbu_received_freq}
\end{align}
where $\mathbf{r}_{m}\in\mathbb{C}^{N\times 1}$ denotes the frequency-domain received vector, $\mathbf{\Lambda}_{m}=\mathbf{F}_{N}\mathbf{H}_{\mathrm{circ},m}\mathbf{F}_{N}^{H}\in\mathbb{C}^{N\times N}$ denotes the frequency-domain channel matrix, and $\mathbf{w}_{m}^{(f)}=\mathbf{F}_{N}\mathbf{w}_{m}$ denotes the transformed noise with the same distribution as $\mathbf{w}_{m}$. Since $\mathbf{H}_{\mathrm{circ},m}$ is circulant, we have that $\mathbf{\Lambda}_{m}$ is diagonal. Therefore, channel equalization remains a per-subcarrier operation on the transformed resource vector $\mathbf{U}\mathbf{s}_{m}$, followed by the inverse unitary transformation $\mathbf{U}^H$.

The receiver employs a pilot-aided channel estimator that accounts for the known effective-pilot pattern to estimate the diagonal entries of $\mathbf{\Lambda}_{m}$. Let $\mathbf{G}_{m}\in\mathbb{C}^{N\times N}$ denote the corresponding diagonal single-tap equalizer, and let $\hat{\mathbf{s}}_{m}\in\mathbb{C}^{N\times 1}$ denote the estimate of the original resource vector. The estimated vector is
\begin{align}
    &\hat{\mathbf{s}}_{m}
    =\mathbf{U}^{H}\mathbf{G}_{m}\mathbf{r}_{m}.               \label{eq:dbu_recovery}
\end{align}
Specifically, $\mathbf{G}_{m}$ first estimates the transformed resource vector $\mathbf{U}\mathbf{s}_{m}$, and $\mathbf{U}^{H}$ then recovers the original resource vector. Under ideal noiseless zero-forcing equalization satisfying $\mathbf{G}_{m}\mathbf{\Lambda}_{m}=\mathbf{I}_{N}$, this processing recovers $\mathbf{s}_{m}$ exactly. For a general equalizer, $\hat{\mathbf{s}}_{m}$ is the corresponding symbol estimate. Moreover, the inverse unitary transformation does not amplify the total post-equalization noise energy, whereas any noise enhancement caused by channel inversion is governed by $\mathbf{G}_{m}$.

\begin{Rem}[Communication compatibility]\label{rem:dbu_communication}
The structure design of DBU-OFDM retains the DFT channel diagonalization under the stated CP-OFDM conditions. Moreover, the data-block transformation $\mathbf{U}_{d}$ can spread each data symbol across multiple data subcarriers, thereby enabling a frequency-domain diversity mechanism in frequency-selective channels. The communication behavior and performance obtained when this mechanism is optimized for communication objectives were analyzed in \cite{luo2026dbu}. Accordingly, we focus on optimizing DBU-OFDM for sensing and evaluating its sensing performance.
\end{Rem}

\subsubsection{Sensing Processing}
For sensing, the transmitted DBU-OFDM frame $\bar{\mathbf{X}}$ serves as the known reference waveform at the sensing receiver, as shown in Fig.~\ref{fig:dbu_transceiver_pipeline}. We denote its self-ambiguity function by $\bar{\Gamma}_{\chi}[k,q]$, which is evaluated from $\bar{\mathbf{X}}$ according to Eq. \eqref{eq:self_af} under the P-ACF or A-ACF mode.

Using Eq. \eqref{eq:af_origin} together with the unitarity of $\mathbf{U}$ and $\mathbf{F}_{N}$, we have
\begin{align}
    &\|\bar{\mathbf{X}}\|_{F}^{2}
    =\|\bar{\mathbf{S}}\|_{F}^{2}
    =\|\mathbf{S}\|_{F}^{2}
    =\|\mathbf{X}\|_{F}^{2},                                  \nonumber\\
    &\bar{\Gamma}_{\chi}[0,0]
    =\Gamma_{\chi}[0,0],
    \qquad \chi\in\{\mathrm{P},\mathrm{A}\}.                   \label{eq:dbu_energy}
\end{align}
For every fixed input frame, these equalities establish that the DBU-OFDM transformation preserves the waveform energy and, consequently, the mainlobe reference in the ROI-normalized objective. For a selected correlation mode $\chi\in\{\mathrm{P},\mathrm{A}\}$, the waveform $\bar{\mathbf{X}}$ generated according to Eq. \eqref{eq:dbu_frame} depends on the trainable blocks $\mathbf{U}_{d}$ and $\mathbf{U}_{p}$. Therefore, the ROI-oriented waveform design is formulated as
\begin{subequations}\label{eq:dbu_roi_optimization}
\begin{align}
    (\mathrm{P1}):\quad
    &\underset{\mathbf{U}_{d},\mathbf{U}_{p}}{\operatorname{max}}
      \quad \mathrm{PSLR}_{\mathrm{ROI}}^{(\chi)}(\bar{\mathbf{X}})
      \tag{\theparentequation}\label{eq:dbu_roi_objective}\\
    &\operatorname{s.t.}\quad
      \mathbf{U}_{\vartheta}^{H}\mathbf{U}_{\vartheta}
      =\mathbf{I}_{N_{\vartheta}},\quad \vartheta\in\{d,p\}.
      \tag{\theparentequation a}\label{eq:dbu_roi_unitarity}
\end{align}
\end{subequations}
Since $\mathbf{U}_{0}=\mathbf{I}_{N_0}$ and $\mathbf{\Pi}$ is orthogonal, the block constraint in Eq. \eqref{eq:dbu_roi_unitarity} is equivalent to the full-size unitary condition $\mathbf{U}^{H}\mathbf{U}=\mathbf{I}_{N}$.

During training, we use the differentiable loss in Eq. \eqref{eq:roi_loss} as a surrogate for problem (P1). However, a soft penalty proportional to $\sum_{\vartheta\in\{d,p\}}\|\mathbf{U}_{\vartheta}^{H}\mathbf{U}_{\vartheta}-\mathbf{I}_{N_{\vartheta}}\|_{F}^{2}$ with a finite weight does not require the unitary residual to vanish. We therefore adopt a constraint-preserving parameterization to ensure exact unitarity throughout optimization.

\subsubsection{Householder-Based Trainable Parameterization}

Specifically, we parameterize both resource blocks using complex Householder reflections and diagonal phase rotations.

Let $N_{\vartheta}$ and $K_{\vartheta}$ denote the dimension and Householder depth budget for $\mathbf{U}_{\vartheta}$, respectively, where $K_{\vartheta}$ is a nonnegative integer. For a nonzero trainable vector $\mathbf{v}_{\vartheta,i}\in\mathbb{C}^{N_{\vartheta}\times 1}$, the $i$-th Householder matrix is
\begin{align}
    &\mathbf{H}_{\vartheta,i}
    =\mathbf{I}_{N_{\vartheta}}
     -2\frac{\mathbf{v}_{\vartheta,i}\mathbf{v}_{\vartheta,i}^{H}}
              {\mathbf{v}_{\vartheta,i}^{H}\mathbf{v}_{\vartheta,i}},
    \qquad i=1,\ldots,K_{\vartheta}.                               \label{eq:householder_matrix}
\end{align}
Specifically, the normalized rank-one matrix $\mathbf{v}_{\vartheta,i}\mathbf{v}_{\vartheta,i}^{H}/(\mathbf{v}_{\vartheta,i}^{H}\mathbf{v}_{\vartheta,i})$ is a Hermitian idempotent projector. Therefore, each $\mathbf{H}_{\vartheta,i}$ is Hermitian and unitary. Moreover, we introduce the trainable diagonal phase matrix
\begin{align}
    &\mathbf{D}_{\vartheta}
    =\operatorname{diag}\!\left(
      e^{j\theta_{\vartheta,1}},\ldots,e^{j\theta_{\vartheta,N_{\vartheta}}}
      \right),                                                  \label{eq:diagonal_phase}
\end{align}
where $\theta_{\vartheta,\ell}\in\mathbb{R}$ denotes a trainable phase for $\ell=1,\ldots,N_{\vartheta}$.
The unitary matrix $\mathbf{U}_{\vartheta}$ is then constructed as
\begin{align}
    &\mathbf{U}_{\vartheta}
    =\mathbf{D}_{\vartheta}
     \mathbf{H}_{\vartheta,K_{\vartheta}}\cdots
     \mathbf{H}_{\vartheta,1}.                                    \label{eq:householder_parameterization}
\end{align}
When $K_{\vartheta}=0$, the reflection product is defined as $\mathbf{I}_{N_{\vartheta}}$, and thus $\mathbf{U}_{\vartheta}=\mathbf{D}_{\vartheta}$. The same construction is used for $\mathbf{U}_{d}$ and $\mathbf{U}_{p}$, whereas the null block remains $\mathbf{I}_{N_0}$. Since every factor is unitary, the resulting blocks satisfy the unitary constraints without projection or re-orthogonalization.

To determine whether the Householder construction restricts the feasible unitary transformations, we characterize the depth required for complete unitary representation.
\begin{Lemma}[Representation capability of the Householder parameterization]\label{lemma:1}
For $\vartheta\in\{d,p\}$, a Householder depth budget satisfying $K_{\vartheta}\geq N_{\vartheta}-1$ is sufficient for Eq. \eqref{eq:householder_parameterization} to represent any $N_{\vartheta}\times N_{\vartheta}$ unitary matrix exactly.
\end{Lemma}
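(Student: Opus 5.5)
Given an arbitrary unitary $\mathbf{W}\in\mathbb{C}^{N_{\vartheta}\times N_{\vartheta}}$, I would construct explicit Householder vectors and phases with $\mathbf{D}_{\vartheta}\mathbf{H}_{\vartheta,K_{\vartheta}}\cdots\mathbf{H}_{\vartheta,1}=\mathbf{W}$. The argument is a complex Householder QR triangularization of $\mathbf{W}^{H}$ (equivalently of $\mathbf{W}$ from the right). Unitarity forces the resulting triangular factor to be diagonal with unimodular entries, and that factor is absorbed into $\mathbf{D}_{\vartheta}$. Since the product in Eq. \eqref{eq:householder_parameterization} is written with $\mathbf{D}_{\vartheta}$ on the left, I would work with $\mathbf{W}^{H}$ and at the end use $\mathbf{H}^{H}=\mathbf{H}=\mathbf{H}^{-1}$.

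The first step is the basic complex reflection lemma. For any unit-norm $\mathbf{a},\mathbf{b}$ with $\mathbf{a}^{H}\mathbf{b}$ real, and $\mathbf{a}\neq\mathbf{b}$, the choice $\mathbf{v}=\mathbf{a}-\mathbf{b}$ gives $\mathbf{H}\mathbf{a}=\mathbf{b}$. This is the key technical point in the complex case. For a general column $\mathbf{a}$ one cannot map it to $\mathbf{e}_1$ itself, because $\mathbf{e}_1^{H}\mathbf{a}$ need not be real. Instead one maps it to $e^{j\phi}\mathbf{e}_1$ with $\phi=\arg[\mathbf{a}]_1$, or to $\mathbf{e}_1$ with $\phi=0$ if $[\mathbf{a}]_1=0$. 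The residual phase is exactly what $\mathbf{D}_{\vartheta}$ compensates. The direct check is that $\mathbf{a}^{H}\mathbf{b}$ real implies $\|\mathbf{a}-\mathbf{b}\|^{2}=2(1-\mathbf{a}^{H}\mathbf{b})=2\,\mathbf{v}^{H}\mathbf{a}$, and therefore $\mathbf{H}\mathbf{a}=\mathbf{a}-\mathbf{v}=\mathbf{b}$. The degenerate case $\mathbf{a}=\mathbf{b}$, where $\mathbf{v}$ would vanish, must also be handled. There I would choose any nonzero $\mathbf{v}$ orthogonal to $\mathbf{a}$ so that $\mathbf{H}\mathbf{a}=\mathbf{a}$. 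This requires $N_{\vartheta}\geq2$, and the case $N_{\vartheta}=1$ is trivial since then $\mathbf{W}=\mathbf{D}_{\vartheta}$ with $K_{\vartheta}=0$.

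The second step is the induction. Let $\mathbf{A}_0=\mathbf{W}$, which has unit-norm columns. At step $i=1,\ldots,N_{\vartheta}-1$, the first $i-1$ rows and columns of $\mathbf{A}_{i-1}$ are already diagonal with unimodular diagonal entries. I would choose $\mathbf{v}_{\vartheta,i}$ supported on indices $i,\ldots,N_{\vartheta}$ so that the reflection acts only on that trailing block. It maps the trailing part of column $i$, which is a unit vector by unitarity, to $e^{j\phi_i}\mathbf{e}_i$. Since each $\mathbf{H}_{\vartheta,i}$ is unitary, $\mathbf{A}_i=\mathbf{H}_{\vartheta,i}\mathbf{A}_{i-1}$ stays unitary. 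A unitary matrix whose first $i$ columns are $e^{j\phi_\ell}\mathbf{e}_\ell$ must also have zero entries in rows $1,\ldots,i$ off the diagonal, because its rows are unit-norm. After $N_{\vartheta}-1$ steps, the last column is automatically a unimodular multiple of $\mathbf{e}_{N_{\vartheta}}$. Hence $\mathbf{H}_{\vartheta,N_{\vartheta}-1}\cdots\mathbf{H}_{\vartheta,1}\mathbf{W}=\mathbf{\Phi}$ for some diagonal unitary $\mathbf{\Phi}$.

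The third step assembles the result. Inverting the product gives $\mathbf{W}=\mathbf{H}_{\vartheta,1}\cdots\mathbf{H}_{\vartheta,N_{\vartheta}-1}\mathbf{\Phi}$, which has the diagonal on the wrong side. To match Eq. \eqref{eq:householder_parameterization}, I would instead apply the procedure to $\mathbf{W}^{H}$. This yields $\mathbf{W}^{H}=\mathbf{H}_{1}'\cdots\mathbf{H}_{N_{\vartheta}-1}'\mathbf{\Phi}'$, so $\mathbf{W}=\mathbf{\Phi}'^{H}\mathbf{H}_{N_{\vartheta}-1}'\cdots\mathbf{H}_{1}'$. Relabeling gives $\mathbf{D}_{\vartheta}=\mathbf{\Phi}'^{H}$ and the reflection factors in the required order.

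For $K_{\vartheta}>N_{\vartheta}-1$, the surplus reflections must act as the identity, but a single Householder matrix is never the identity. I would therefore pair them: a repeated vector gives $\mathbf{H}\mathbf{H}=\mathbf{I}$. If the surplus is odd, I would use $N_{\vartheta}$ reflections in the construction instead, replacing the trivially handled last step by an identity-acting reflection of the type used in the degenerate case. This parity bookkeeping is the main obstacle I expect beyond the phase issue.
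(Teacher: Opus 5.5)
Your argument is correct and is essentially the paper's proof. You triangularize $\mathbf{U}_{\vartheta}^{H}$ with $N_{\vartheta}-1$ complex Householder reflections acting on successively smaller trailing blocks, let unitarity collapse the triangular factor to a diagonal phase matrix $\mathbf{\Phi}_{\vartheta}$, and take the Hermitian transpose so that $\mathbf{D}_{\vartheta}=\mathbf{\Phi}_{\vartheta}^{H}$ sits on the left.

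The only difference in the core step is the target phase. The paper chooses $\eta$ with $\eta^{*}a_1=-|a_1|$, so $\mathbf{v}^{H}\mathbf{v}=2(1+|a_1|)\geq 2$. With that choice, the degenerate case you treat separately never arises.

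Your treatment of $K_{\vartheta}>N_{\vartheta}-1$ goes beyond the paper, which does not address surplus reflections. Pairing repeated vectors is fine. The odd-surplus fix, however, does not work as worded. No Householder reflection acts as the identity on a unitary matrix. Also, at the last step the trailing block is one-dimensional, so no nonzero trailing vector orthogonal to the column exists.

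Use a coordinate reflection $\mathbf{I}-2\mathbf{e}_{\ell}\mathbf{e}_{\ell}^{H}$ instead. It is diagonal with entries $\pm1$. Any number of such factors placed immediately to the right of $\mathbf{D}_{\vartheta}$ multiply to a diagonal phase matrix, which $\mathbf{D}_{\vartheta}$ absorbs. This removes the parity bookkeeping entirely.
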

\begin{IEEEproof}
    Please refer to Appendix \ref{app:A} for detailed proof.
\end{IEEEproof}

\begin{Rem}[Depth requirements for the resource blocks]\label{rem:householder_depth}
Applying Lemma~\ref{lemma:1} to the data and pilot blocks shows that $K_d\geq N_d-1$ and $K_p\geq N_p-1$ are sufficient to represent arbitrary $\mathbf{U}_{d}$ and $\mathbf{U}_{p}$ exactly. When $K_{\vartheta}<N_{\vartheta}-1$, every constructed matrix remains exactly unitary, but complete representation of the $N_{\vartheta}\times N_{\vartheta}$ unitary group is no longer guaranteed. Moreover, note that Lemma~\ref{lemma:1} characterizes the representation capability of the parameterization but does not guarantee that gradient-based training converges to a globally optimal unitary matrix.
\end{Rem}

During training, the reflection vectors and diagonal phases are optimized by minimizing $\mathcal{L}_{\chi}(\bar{\mathbf{X}})$ in Eq. \eqref{eq:roi_loss}. 

\subsection{Sensing Subcarrier Allocation}\label{sec:III-C}

\subsubsection{Sensing and Communication Resource Partition}

The data subspace originally carries random communication symbols whose values are determined by the transmitted payload. To increase the controllable waveform-shaping degrees of freedom, we reserve a subset of the data subcarriers for known and trainable sensing symbols. 



Specifically, let $\bm{\mathcal{K}}_s$ denote the sensing-subcarrier set selected from $\bm{\mathcal{K}}_d$, while $\bm{\mathcal{K}}_c$ contains the remaining communication subcarriers. These two sets satisfy
\begin{align}
    &\bm{\mathcal{K}}_d
      =\bm{\mathcal{K}}_s\cup\bm{\mathcal{K}}_c,
      \qquad
      \bm{\mathcal{K}}_s\cap\bm{\mathcal{K}}_c=\varnothing,\nonumber\\
    &|\bm{\mathcal{K}}_s|=K_s,
      \qquad
      |\bm{\mathcal{K}}_c|=K_c=N_d-K_s.                       \label{eq:sensing_comm_partition}
\end{align}
Increasing $K_s$ will provide more controllable sensing resources and reduce the number of subcarriers available for the communication payload. Therefore, $K_s$ directly determines the sensing-resource budget and the corresponding sensing-communication resource tradeoff.

Let $\mathbf{E}_s\in\{0,1\}^{N_d\times K_s}$ and $\mathbf{E}_c\in\{0,1\}^{N_d\times K_c}$ denote the binary matrices that embed the sensing and communication blocks into their corresponding positions within the data subspace. Moreover, let $\mathbf{S}_s\in\mathbb{C}^{K_s\times M}$ and $\mathbf{S}_c\in\mathbb{C}^{K_c\times M}$ collect the sensing and communication symbols, and let $\mathbf{U}_s$ and $\mathbf{U}_c$ denote their unitary transformations. The partitioned data block and its transformation are expressed as
\begin{align}
    &\mathbf{S}_d
      =\mathbf{E}_s\mathbf{S}_s
       +\mathbf{E}_c\mathbf{S}_c,\nonumber\\
    &\mathbf{U}_d
      =\mathbf{E}_s\mathbf{U}_s\mathbf{E}_s^{T}
       +\mathbf{E}_c\mathbf{U}_c\mathbf{E}_c^{T},\nonumber\\
    &\mathbf{U}_d\mathbf{S}_d
      =\mathbf{E}_s\mathbf{U}_s\mathbf{S}_s
       +\mathbf{E}_c\mathbf{U}_c\mathbf{S}_c.                  \label{eq:data_unitary_action}
\end{align}
Since the two embedding matrices select disjoint supports, the sensing and communication symbols do not mix during the linear frequency-domain transformation. The matrix $\mathbf{S}_s$ contains known and trainable sensing symbols normalized to unit average power, whereas $\mathbf{S}_c$ contains random communication symbols.

\subsubsection{Array-Factor-Based Support Initialization}

To characterize the role of the sensing support, we separate the transmitted DBU-OFDM frame into the sensing/pilot component $\bar{\mathbf{X}}_{sp}$ and the random communication component $\bar{\mathbf{X}}_c$, i.e., $\bar{\mathbf{X}} = \bar{\mathbf{X}}_{sp} + \bar{\mathbf{X}}_c$. The null resources do not contribute to either component. Let $\Gamma_{\chi}(\mathbf{A},\mathbf{B})[k,q]$ denote the cross-ambiguity term between $\mathbf{A}$ and $\mathbf{B}$. The complete self-ambiguity function is then decomposed as
\begin{align}
    &\bar{\Gamma}_{\chi}[k,q]
      =\Gamma_{\chi}(\bar{\mathbf{X}}_{sp},
       \bar{\mathbf{X}}_{sp})[k,q]
       +\Gamma_{\chi}(\bar{\mathbf{X}}_c,
       \bar{\mathbf{X}}_c)[k,q]\nonumber\\
    &\qquad
       +\Gamma_{\chi}(\bar{\mathbf{X}}_{sp},
       \bar{\mathbf{X}}_c)[k,q]
       +\Gamma_{\chi}(\bar{\mathbf{X}}_c,
       \bar{\mathbf{X}}_{sp})[k,q].                            \label{eq:ambiguity_component_decomposition}
\end{align}
The first two terms describe the sensing/pilot and communication self-ambiguity contributions, respectively, while the last two are their cross-ambiguity terms. Their coherent superposition determines the complete ROI sidelobes.

During initialization, $\bar{\mathbf{X}}_c$ varies with the random payload, whereas the sensing excitation and pilots assigned to each candidate support are fixed. To obtain a data-independent starting support, we therefore optimize a proxy of the sensing/pilot self-ambiguity term.

Let $\bm{\mathcal{A}}=\bm{\mathcal{K}}_s\cup\bm{\mathcal{K}}_p$ denote the combined support. We assign unit excitation to these resources and consider their zero-Doppler periodic delay response. Since this response is proportional to the inverse DFT of the support indicator, its magnitude is proportional to that of the following array factor
\begin{align}
    &A_{\bm{\mathcal{A}}}[k]
      =\sum_{\ell\in\bm{\mathcal{A}}}
       e^{-j2\pi \ell k/N},\nonumber\\
    &|A_{\bm{\mathcal{A}}}[k]|^2
      =\sum_{\ell_1\in\bm{\mathcal{A}}}
       \sum_{\ell_2\in\bm{\mathcal{A}}}
       e^{-j2\pi(\ell_1-\ell_2)k/N}.                          \label{eq:support_arrayfactor}
\end{align}
The first row represents a support-only approximation of the sensing/pilot delay sidelobe, while the second row shows its dependence on the pairwise differences between the occupied subcarrier indices. Thus, the sensing support directly controls the effective frequency aperture and difference-coarray structure of this approximation.

Let $\Omega_k\subseteq\mathbb{Z}_N\setminus\{0\}$ denote the delay-axis projection of $\Omega_{\mathrm{SL}}$. For a fixed sensing-resource budget $K_s$, the support initialization is formulated as
\begin{subequations}\label{eq:arrayfactor_support_problem}
\begin{align}
    (\mathrm{P2}):\quad
    &\underset{\bm{\mathcal{K}}_s}{\operatorname{min}}
      \max_{k\in\Omega_k}\quad
      \frac{
       |A_{\bm{\mathcal{A}}}[k]|
      }{K_s+N_p}
      \tag{\theparentequation}\label{eq:arrayfactor_support_objective}\\
    &\operatorname{s.t.}\quad
      \bm{\mathcal{K}}_s\subseteq\bm{\mathcal{K}}_d
      \tag{\theparentequation a}\label{eq:arrayfactor_support_constraint}
\end{align}
\end{subequations}
The denominator is the zero-delay array-factor magnitude and remains constant for a prescribed $K_s$. Moreover, the pilots are included in every candidate support, while the null subcarriers are excluded from the candidate set. We approximately solve problem (P2) through sequential support construction followed by single-subcarrier exchange refinement.


The array-factor problem optimizes only the sensing/pilot term in Eq. \eqref{eq:ambiguity_component_decomposition} under the stated support-only assumptions. The resulting support is therefore not guaranteed to minimize the complete objective after the communication and cross-ambiguity terms are included. Moreover, the proxy does not account for the actual symbol values, unitary transformations, Doppler structure, or A-ACF boundaries. These factors are included in the subsequent joint optimization through the complete self-ambiguity loss.

Accordingly, the joint DBU-OFDM waveform design is formulated as
\begin{subequations}\label{eq:joint_sensing_allocation_problem}
\begin{align}
    (\mathrm{P2.1}):\quad
    &\underset{
       \substack{\bm{\mathcal{K}}_s,\mathbf{S}_s,\\
       \mathbf{U}_s,\mathbf{U}_c,\mathbf{U}_p}
      }{\operatorname{min}}
      \quad \mathcal{L}_{\chi}(\bar{\mathbf{X}})
      \tag{\theparentequation}\label{eq:joint_sensing_allocation_objective}\\
    &\operatorname{s.t.}\quad
      \bm{\mathcal{K}}_s\subseteq\bm{\mathcal{K}}_d
      \tag{\theparentequation a}\label{eq:joint_sensing_support_constraint}\\
    &\hphantom{\operatorname{s.t.}\quad}
      |\bm{\mathcal{K}}_s|=K_s
      \tag{\theparentequation b}\label{eq:joint_sensing_cardinality_constraint}\\
    &\hphantom{\operatorname{s.t.}\quad}
      \|\mathbf{S}_s\|_F^2=K_sM
      \tag{\theparentequation c}\label{eq:joint_sensing_power_constraint}\\
    &\hphantom{\operatorname{s.t.}\quad}
      \mathbf{U}_\vartheta^{H}\mathbf{U}_\vartheta
      =\mathbf{I}_{K_\vartheta},\quad \vartheta\in\{s,c,p\}.
      \tag{\theparentequation d}\label{eq:joint_sensing_unitarity_constraint}
\end{align}
\end{subequations}

The array-factor solution initializes the hard support. When differentiable support refinement is enabled, the support is refined through a straight-through top-$K_s$ estimator using gradients of the complete loss, while $\mathbf{S}_s$, $\mathbf{U}_s$, $\mathbf{U}_c$, and $\mathbf{U}_p$ are jointly optimized.

\section{Numerical Results}\label{sec:IV}

In this section, we present numerical results to evaluate the proposed ROI-oriented DBU-OFDM waveform under the P-ACF and A-ACF modes, with emphasis on ROI sidelobe suppression, sensing-subcarrier allocation, and the sensing-resource tradeoff. Unless otherwise specified, the common discrete-domain waveform, ROI, resource, and training parameters are listed in Table~\ref{tab:numerical_parameters}.

\begin{table}[!t]
    \centering
    \caption{Common discrete-domain waveform, ROI, resource, and training parameters.}
    \label{tab:numerical_parameters}
    \footnotesize
    \begin{tabular}{@{}lcc@{}}
        \toprule
        \textbf{Parameter} & \textbf{Notation} & \textbf{Value} \\
        \midrule
        Total subcarriers & $N$ & 256 \\
        OFDM symbols per frame & $M$ & 128 \\
        Delay indices in the ROI & $k$ & $0,\ldots,31$ \\
        Doppler indices in the ROI & $q$ & $-5,\ldots,5$ \\
        Edge guards & $N_g$ & 4 per edge \\
        DC nulls & $N_{dc}$ & 2 \\
        Comb-type pilots & $N_p$ & 16 \\
        Data subcarriers & $N_d$ & 230 \\
        Communication modulation & -- & 64-QAM \\
        Independent frames per batch & -- & 100 \\
        Smooth-maximum parameter & $\alpha$ & 5 \\
        Optimizer & -- & AdamW \\
        Learning rate & -- & $10^{-2}$ \\
        Maximum training epochs & -- & $1.2\times10^{4}$ \\
        Householder depth & $K_{\vartheta}$ & $N_{\vartheta}-1$ \\
        \bottomrule
    \end{tabular}
\end{table}

To relate these discrete parameters to a representative physical sensing scenario, we consider $f_c=28$ GHz, $B=30$ MHz, and $N_{\mathrm{CP}}=N_{\mathrm{ZG}}=64$. The sampling bandwidth gives a range-bin spacing of approximately 5 m, and the delay indices $k=0,\ldots,31$ therefore represent the range interval $[0,160)$ m. The representative guard lengths give $T_{\mathrm{P}}=T_{\mathrm{A}}=10.67~\mu$s. Thus, the Doppler indices $q=-5,\ldots,5$ have a Doppler-bin spacing of 732.4 Hz and represent the velocity interval $[-19.6,19.6]$ m/s. Moreover, $N_{\mathrm{CP}}=N_{\mathrm{ZG}}=64>31$ satisfies the delay conditions of the P-ACF and A-ACF models. The largest absolute Doppler index $q_{\max}^{(\chi)}=5$ gives $\phi_{\mathrm{intra},\max}^{(\chi)}=0.196$ rad $<0.2$ rad in Eq. \eqref{eq:small_doppler}, which supports neglecting the intrasymbol Doppler variation and the resulting ICI under this representative configuration.

The communication symbols are independently drawn from a 64-QAM constellation and normalized to unit average power. The input comb-type pilot symbols are fixed as $P_{n,m}=(1+j)/\sqrt{2}$ and therefore also have unit power. Moreover, the sensing symbols are normalized to unit average power.

For each trainable resource block $\mathbf{U}_{\vartheta}$, where $\vartheta\in\{s,c,p\}$, the number of Householder reflections is set to $K_{\vartheta}=N_{\vartheta}-1, \forall \vartheta\in\{s,c,p\}$. The null block remains $\mathbf{U}_0=\mathbf{I}_{N_0}$ and does not contain trainable reflections.

\subsection{Validation Under a Fully Occupied OFDM Resource Structure}\label{sec:IV-A}

We first validate the proposed waveform model and training method under a fully occupied OFDM structure, which enables comparison with the known delay-domain properties of conventional OFDM. In this section, all $N=256$ subcarriers carry independent unit-power 64-QAM symbols. Thus, $N_d=N$, $N_p=N_0=K_s=0$, and only $\mathbf{U}_d$ is trained separately for the two correlation modes. We set $\Omega_{\mathrm{ROI}}=\mathbb{Z}_N\times\{0\}$ and $\Omega_{\mathrm{SL}}=(\mathbb{Z}_N\setminus\{0\})\times\{0\}$ and minimize the loss in Eq. \eqref{eq:roi_loss} over this zero-Doppler delay slice. 

\begin{figure*}[!t]
    \centering
    \subfloat[Common initial $\left|\mathbf{U}_d\right|$.]{%
        \includegraphics[width=0.34\textwidth, trim={34bp 5bp 48bp 10bp},
    clip]{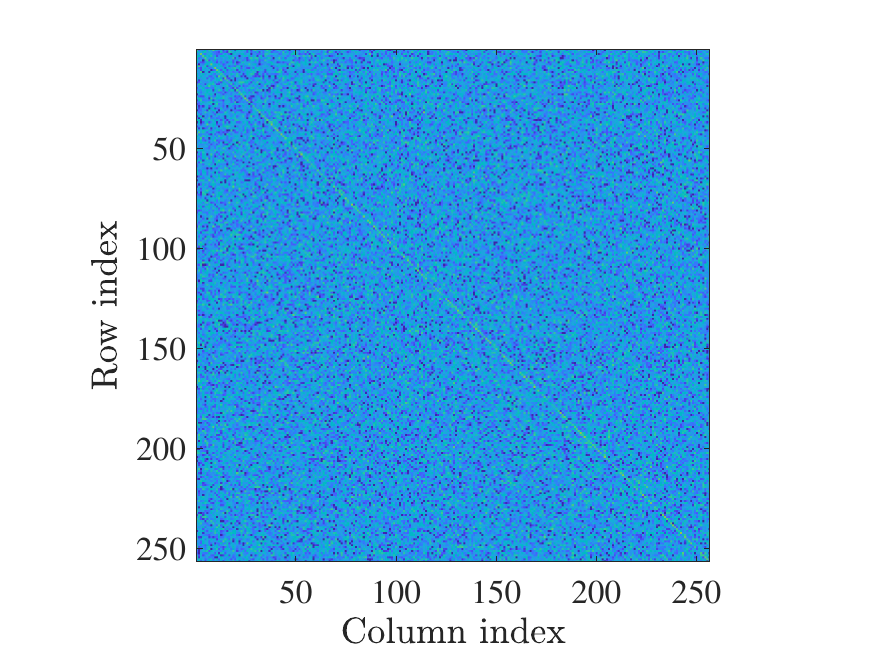}%
        \label{fig:common_unitary_init}}
    \hspace{-0.03\textwidth}
    \subfloat[Trained $\left|\mathbf{U}_d\right|$ under P-ACF.]{%
        \includegraphics[width=0.34\textwidth, trim={34bp 5bp 48bp 10bp},
    clip]{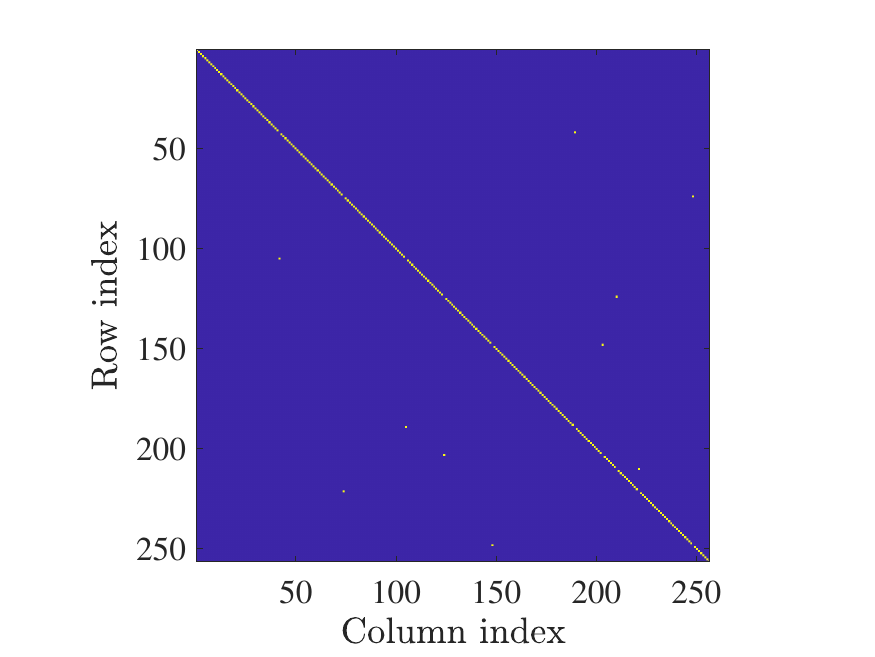}%
        \label{fig:pacf_unitary_trained}}
    \hspace{-0.025\textwidth}
    \subfloat[Trained $\left|\mathbf{U}_d\right|$ under A-ACF.]{%
        \includegraphics[width=0.36\textwidth, trim={14bp 5bp 48bp 10bp},
    clip]{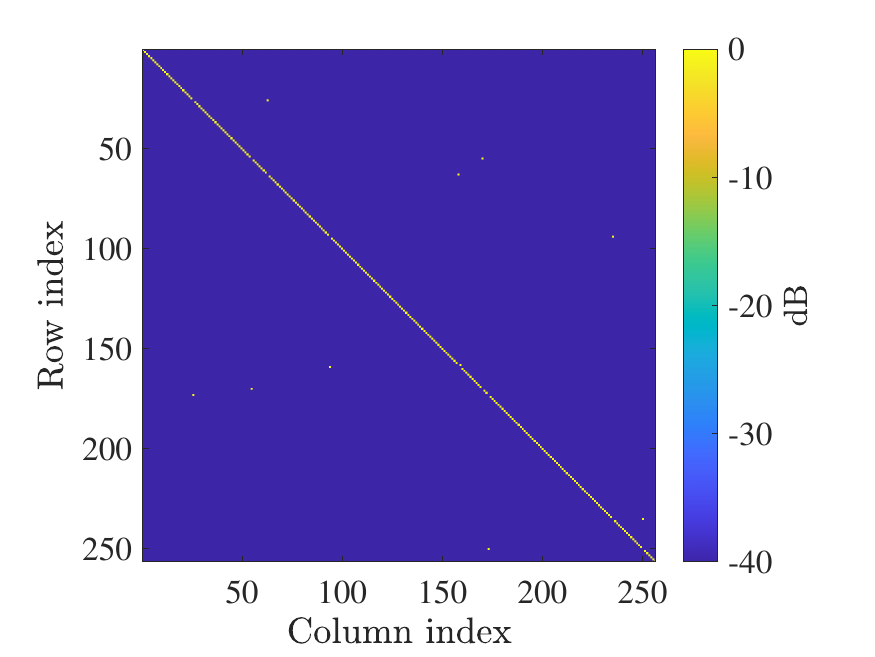}%
        \label{fig:aacf_unitary_trained}}
    \caption{Data-block unitary matrices before and after training under the fully occupied OFDM resource structure. The P-ACF and A-ACF experiments use the same dense initialization. The entry magnitudes are shown in dB.}
    \label{fig:full_resource_unitary_evolution}
\end{figure*}

\begin{figure}[!t]
    \centering
    \subfloat[Rows and columns 1-16.]{%
        \includegraphics[width=0.202\textwidth, trim={44bp 5bp 84bp 10bp},
    clip]{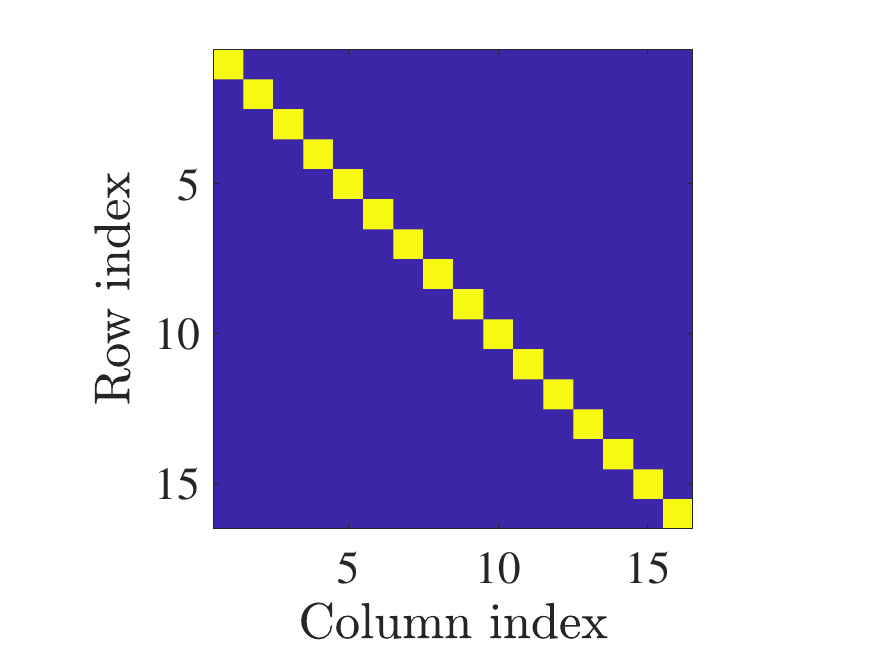}%
        \label{fig:pacf_unitary_trained_subplot1}}
   \hspace{0.01\textwidth}
    \subfloat[Rows and columns 208-223.]{%
        \includegraphics[width=0.248\textwidth, trim={14bp 5bp 48bp 10bp},
    clip]{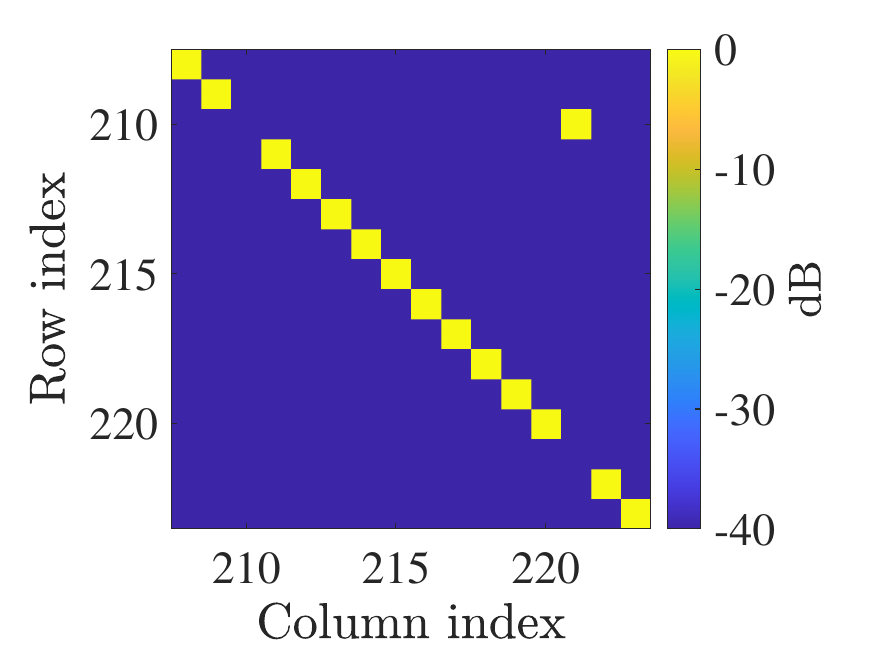}%
        \label{fig:pacf_unitary_trained_subplot2}}
    \caption{Enlarged views of two principal submatrices extracted from the trained P-ACF data-block unitary matrix in Fig.~\ref{fig:pacf_unitary_trained}.}
    \label{fig:full_resource_unitary_evolution_subplot}
\end{figure}

Fig.~\ref{fig:full_resource_unitary_evolution} shows that training converts the common dense initialization into matrices with one dominant, near-$0$-dB entry per row and column under both correlation modes. Most dominant entries lie on the main diagonal. The enlarged P-ACF submatrices in Fig.~\ref{fig:full_resource_unitary_evolution_subplot} reveal a few off-diagonal entries and therefore a small symbol reordering. We quantify this near-monomial structure using the monomiality ratio
\begin{align}
    &\rho_{\mathrm{mon}}(\mathbf{U}_d)
    =\frac{1}{N}\max_{\pi\in\mathfrak{S}_N}
      \sum_{n=0}^{N-1}\left|[\mathbf{U}_d]_{n,\pi(n)}\right|^2,
                                                               \label{eq:monomiality_ratio}
\end{align}
where $\mathfrak{S}_N$ denotes the set of all permutations of $\mathbb{Z}_N$. A monomial unitary matrix has $\rho_{\mathrm{mon}}=1$. The common initialization gives $\rho_{\mathrm{mon}}=0.0268$, which increases to $0.9980$ under P-ACF training and $0.9989$ under A-ACF training.

These values indicate that the trained data-block transformations approach the monomial unitary form
\begin{align}
    &\mathbf{U}_d^{(\chi)}
    \approx\mathbf{D}_{\chi}\mathbf{P}_{\chi},
    \quad \chi\in\{\mathrm{P},\mathrm{A}\},\nonumber\\
    &\mathbf{D}_{\chi}
    =\operatorname{diag}\!\left(
      e^{j\phi_{\chi,0}},\ldots,e^{j\phi_{\chi,N-1}}
      \right),
    \quad
    \mathbf{P}_{\chi}^{T}\mathbf{P}_{\chi}=\mathbf{I}_N,
                                                               \label{eq:trained_monomial_form}
\end{align}
where $\mathbf{D}_{\chi}$ is a diagonal phase matrix and $\mathbf{P}_{\chi}\in\{0,1\}^{N\times N}$ is a permutation matrix. Accordingly, the learned transformation approximately preserves a one-to-one mapping between the data symbols and subcarriers. It mainly permutes the data symbols and applies per-subcarrier phase rotations, with negligible mixing across subcarriers.

Under the P-ACF mode, convergence to this monomial structure agrees with the established result that conventional OFDM minimizes the expected sidelobe power at every nonzero delay under QAM/PSK signaling \cite{liu2025cp}. For the A-ACF mode, the corresponding analysis establishes only local optimality under the expected integrated sidelobe power criterion \cite{liu2025cp}. Nevertheless, the trained matrix converges to the same OFDM-equivalent monomial family observed under the P-ACF mode. Although this convergence does not prove global optimality, it provides numerical evidence supporting the conjecture that an OFDM-equivalent monomial transformation may attain the global optimum of the considered zero-Doppler A-ACF delay-slice PSLR problem within the unitary waveform class.

\begin{figure}[!t]
    \centering
    \subfloat[P-ACF.]{%
        \includegraphics[width=0.99\linewidth]{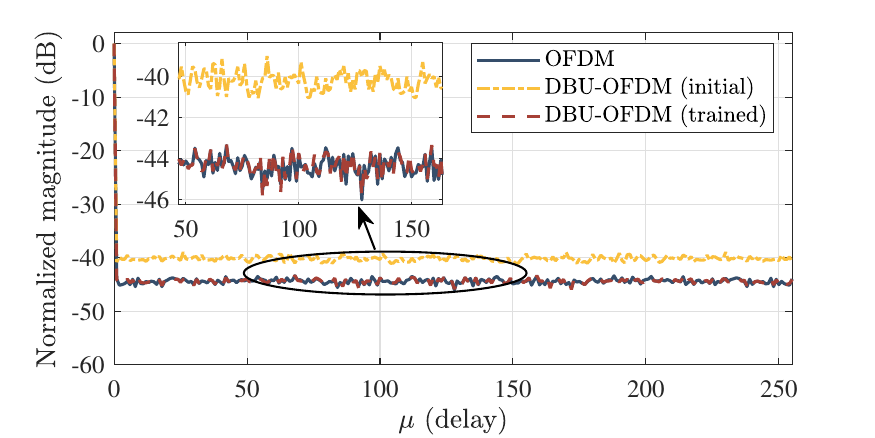}%
        \label{fig:full_resource_pacf_delay_cut}}
    \\
    \vspace{-10pt}
    \subfloat[A-ACF.]{%
        \includegraphics[width=0.99\linewidth]{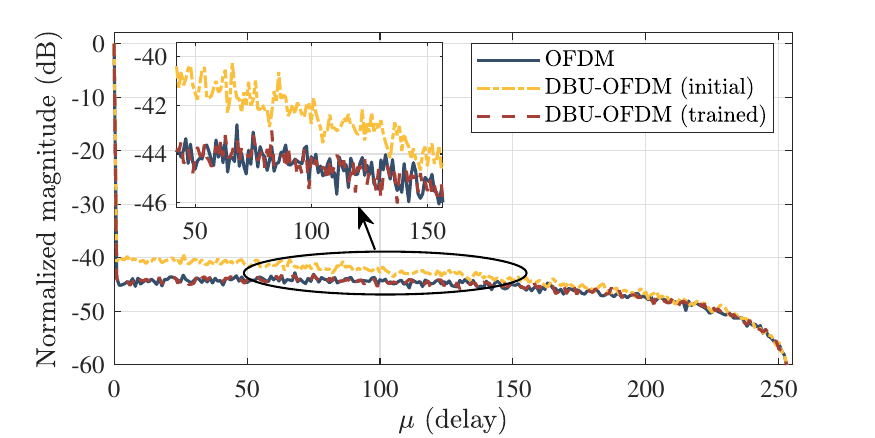}%
        \label{fig:full_resource_aacf_delay_cut}}
    \caption{Normalized zero-Doppler delay cuts under the fully occupied OFDM resource structure. The noiseless OFDM, initialized DBU-OFDM, and trained DBU-OFDM PSLRs are $43.37$, $38.96$, and $43.30$ dB in the P-ACF mode, respectively, and $42.80$, $39.35$, and $43.04$ dB in the A-ACF mode, respectively.}
    \label{fig:full_resource_delay_cuts}
\end{figure}

The zero-Doppler delay cuts in Fig.~\ref{fig:full_resource_delay_cuts} show that the trained DBU-OFDM and conventional OFDM curves nearly overlap over the nonzero delay indices in both correlation modes. The small residual differences are mainly attributed to the deviation of the trained matrices from the exact monomial form in Eq. \eqref{eq:trained_monomial_form}, as indicated by $\rho_{\mathrm{mon}}(\mathbf{U}_d)=0.9980<1$, and the finite-sample variation associated with averaging over random QAM frames. In contrast, the common dense initialization produces substantially higher delay sidelobes. Training improves the PSLR by $4.34$ dB in the P-ACF mode and $3.69$ dB in the A-ACF mode. These improvements demonstrate the effectiveness of the DBU-OFDM training procedure in suppressing the delay sidelobes under both correlation modes.

\subsection{Pilot-Only Ambiguity Shaping}\label{sec:IV-B}

We next consider the practical resource structure in Table~\ref{tab:numerical_parameters} and set $K_s=0$, so that no subcarrier is reserved for sensing. The communication subcarriers continue to carry random 64-QAM symbols, while the prescribed pilot resources provide an additional waveform-shaping opportunity. To isolate this opportunity, we set $\mathbf{U}_d=\mathbf{U}_c=\mathbf{I}_{N_d}$ and optimize only the pilot resources. We compare two pilot designs. The phase-only (PO) design optimizes the phase of every pilot symbol while maintaining unit amplitude, whereas the unitary (UN) design applies the trainable pilot block $\mathbf{U}_p$ to the prescribed pilot vector. Under the UN design, the effective pilots may have different amplitudes and phases, although their total energy is preserved by $\mathbf{U}_p$.\footnote{The unitary pilot design preserves the total pilot energy but does not impose a lower bound on each effective-pilot amplitude. This may affect the practical channel estimation and the resulting communication-sensing tradeoff.}
\begin{figure*}[!t]
    \centering
    \subfloat[P-ACF at $q=0$.]{%
        \includegraphics[width=0.49\textwidth]{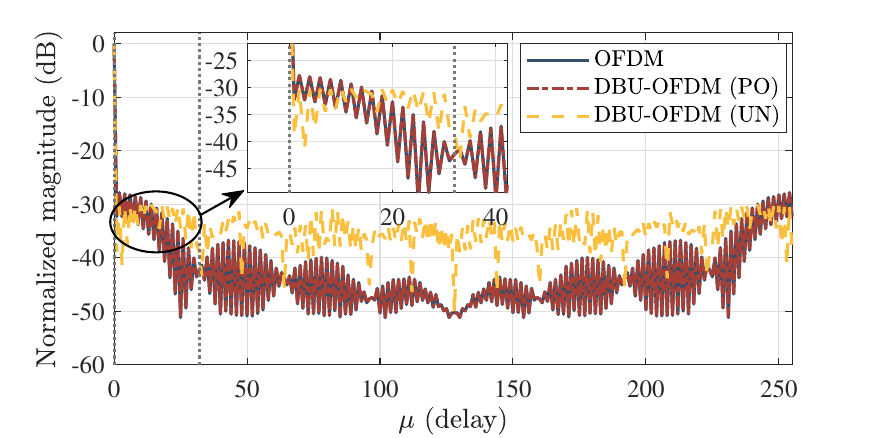}%
        \label{fig:pilot_design_pacf_delay_cut}}
    \hfill
    \subfloat[P-ACF at $k=0$.]{%
        \includegraphics[width=0.49\textwidth]{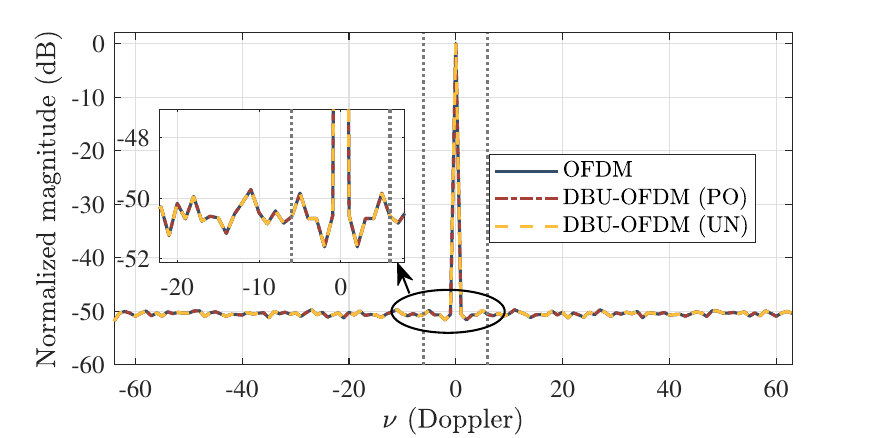}%
        \label{fig:pilot_design_pacf_doppler_cut}}
    \\
    \vspace{-10pt}
    \subfloat[A-ACF at $q=0$.]{%
        \includegraphics[width=0.49\textwidth]{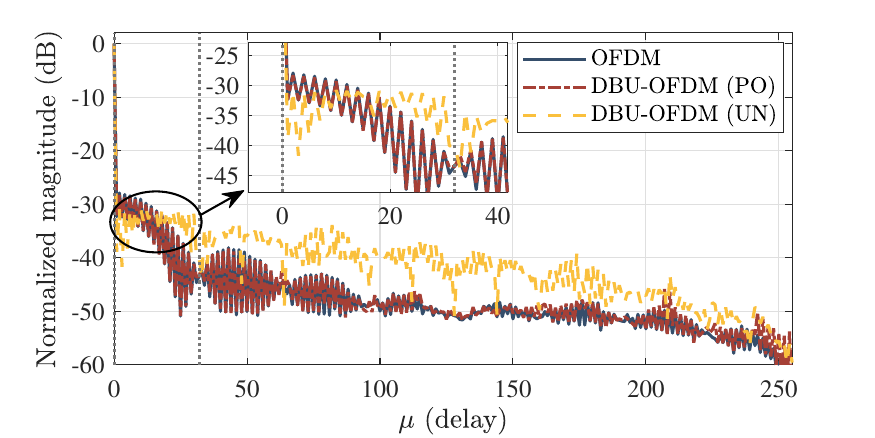}%
        \label{fig:pilot_design_aacf_delay_cut}}
    \hfill
    \subfloat[A-ACF at $k=0$.]{%
        \includegraphics[width=0.49\textwidth]{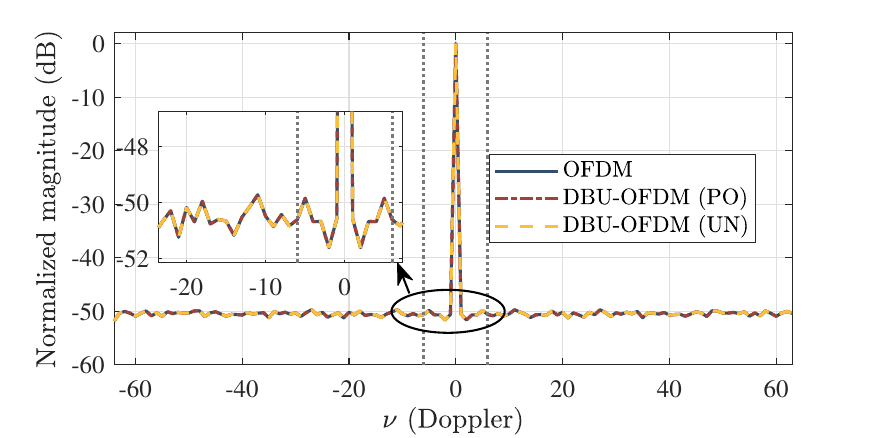}%
        \label{fig:pilot_design_aacf_doppler_cut}}
    \caption{Normalized zero-Doppler delay cuts and zero-delay Doppler cuts for conventional OFDM and DBU-OFDM with phase-only (PO) and unitary (UN) pilot designs. (a) and (b) show the P-ACF results, while (c) and (d) show the A-ACF results.}
    \label{fig:pilot_design_delay_doppler_cuts}
\end{figure*}

Fig.~\ref{fig:pilot_design_delay_doppler_cuts} compares the two pilot designs through the zero-Doppler delay cuts and zero-delay Doppler cuts. Under the P-ACF mode, the phase-only design retains the conventional OFDM delay/Doppler response, whereas the unitary design increases the ROI PSLR from $27.83$ to $30.14$ dB, corresponding to an improvement of approximately $2.31$ dB. Under the A-ACF mode, the phase-only design provides a limited improvement from $27.91$ to $28.27$ dB, while the unitary design increases the ROI PSLR to $30.77$ dB and achieves a gain of approximately $2.86$ dB. Thus, phase adjustment alone provides only a limited A-ACF gain and does not improve the P-ACF, whereas the additional amplitude-redistribution freedom of the unitary pilot block improves the ROI PSLR under both correlation modes.

The absence of a P-ACF gain from a strictly phase-only pilot design follows from the circular correlation property.

\begin{Lemma}[P-ACF invariance under phase-only pilot design]\label{lemma:lemma2}
For a fixed OFDM frame, adjusting only the phases of the pilot symbols will not change the P-ACF sidelobes.
\end{Lemma}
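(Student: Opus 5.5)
The plan is to show that, in the P-ACF mode, the whole two-dimensional self-ambiguity function $\Gamma_{\mathrm{P}}[k,q]$ in Eq.~\eqref{eq:self_af} depends on the frequency-domain frame $\mathbf{S}$ only through the entrywise power profile $\{|S_{n,m}|^2\}$. Once this is established, the lemma follows immediately. A phase-only pilot design replaces $P_{n,m}$ by $P_{n,m}e^{j\theta_{n,m}}$ on $\bm{\mathcal{K}}_p$ and leaves the data and null entries untouched, so it does not change any $|S_{n,m}|^2$. Hence no value $\Gamma_{\mathrm{P}}[k,q]$ changes, and in particular neither the sidelobes over $\Omega_{\mathrm{SL}}$ nor the mainlobe $\Gamma_{\mathrm{P}}[0,0]$ from Eq.~\eqref{eq:af_origin} changes. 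The ROI PSLR in Eq.~\eqref{eq:roi_pslr} is therefore also invariant. This holds for every fixed realization of the random data, and the argument allows the phases to differ across pilots and OFDM symbols.

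\textbf{Step 1: per-symbol P-ACF.} Substitute $x_{n,m}=N^{-1/2}\sum_{\ell\in\mathbb{Z}_N}S_{\ell,m}e^{j2\pi \ell n/N}$ from Eq.~\eqref{eq:ofdm_modulation} into Eq.~\eqref{eq:p_acf_1d}. The modulo-$N$ wrap of the delayed index is harmless because $e^{j2\pi \ell((n+k)\bmod N)/N}=e^{j2\pi \ell(n+k)/N}$. This gives
\begin{align}
r_{\mathrm{P},m}[k]=\frac{1}{N}\sum_{\ell_1,\ell_2}S_{\ell_1,m}^{*}S_{\ell_2,m}e^{j2\pi \ell_2 k/N}\sum_{n=0}^{N-1}e^{j2\pi(\ell_2-\ell_1)n/N}.\nonumber
\end{align}

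\textbf{Step 2: apply DFT orthogonality.} The inner sum over $n$ equals $N\delta_{\ell_1,\ell_2}$. This is the step that needs care: it is valid only because the periodic indexing makes the sum run over a full period. It is exactly where the argument fails for the A-ACF in Eq.~\eqref{eq:a_acf_1d}, whose truncated sum leaves cross terms $S_{\ell_1,m}^*S_{\ell_2,m}$ with $\ell_1\neq\ell_2$ that are phase dependent. The result is
\begin{align}
r_{\mathrm{P},m}[k]=\sum_{\ell=0}^{N-1}|S_{\ell,m}|^2e^{j2\pi \ell k/N}.\nonumber
\end{align}

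\textbf{Step 3: combine across symbols.} Substituting into Eq.~\eqref{eq:self_af} gives
\begin{align}
\Gamma_{\mathrm{P}}[k,q]=\sum_{m=0}^{M-1}\sum_{\ell=0}^{N-1}|S_{\ell,m}|^2e^{j2\pi \ell k/N}e^{-j2\pi qm/M},\nonumber
\end{align}
which is a two-dimensional DFT of the power profile. By Eq.~\eqref{eq:resource_symbols}, split the inner sum over $\bm{\mathcal{K}}_d$, $\bm{\mathcal{K}}_p$ and $\bm{\mathcal{K}}_0$. Since $|P_{n,m}e^{j\theta_{n,m}}|^2=|P_{n,m}|^2$, every term is unchanged, which completes the argument.

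No step is a real obstacle; the only point requiring attention is the full-period orthogonality in Step 2. I would also add a remark that the same computation explains why the unitary pilot design can still help in the P-ACF mode. A unitary $\mathbf{U}_p$ redistributes the effective pilot amplitudes $|[\mathbf{U}_p\mathbf{s}_{p,m}]_i|^2$, and those amplitudes do enter the expression for $\Gamma_{\mathrm{P}}[k,q]$.
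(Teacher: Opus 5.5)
Your proposal is correct and follows essentially the same route as the paper's proof. Both substitute the normalized IDFT into the P-ACF and collapse the full-period sum via DFT orthogonality to get $r_{\mathrm{P},m}[k]=\sum_{\ell}|S_{\ell,m}|^2e^{j2\pi\ell k/N}$; since phase-only pilot changes leave every $|S_{\ell,m}|^2$ unchanged, every $\Gamma_{\mathrm{P}}[k,q]$ is unchanged, and your explicit handling of the modulo-$N$ wrap and the extension to the mainlobe and ROI PSLR are small but welcome additions.
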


\begin{IEEEproof}
    Please refer to Appendix \ref{app:B} for detailed proof.
\end{IEEEproof}

Lemma~\ref{lemma:lemma2} explains why optimizing only the pilot phases does not provide a P-ACF gain. Note that the same conclusion could also be applied to phase-only adjustment of the sensing subcarriers. However, the A-ACF does not have this phase-invariance property since its finite-overlap correlation contains cross-subcarrier terms. Pilot phases can therefore modify its nonzero-delay sidelobes.

Moreover, the three zero-delay Doppler cuts in Figs.~\ref{fig:pilot_design_pacf_doppler_cut} and \ref{fig:pilot_design_aacf_doppler_cut} overlap exactly since the energy of each symbol is preserved.

\begin{Lemma}[Zero-delay Doppler-cut invariance of DBU-OFDM]\label{lemma:lemma3}
For a fixed input frame, let $\Gamma_{\chi}[k,q]$ and $\bar{\Gamma}_{\chi}[k,q]$ denote the self-ambiguity functions of conventional OFDM and DBU-OFDM with an arbitrary unitary transformation $\mathbf{U}$, respectively. Their zero-delay Doppler cuts are identical under both correlation modes, and thus we have
\begin{align}
    &\bar{\Gamma}_{\chi}[0,q]
    =\Gamma_{\chi}[0,q],
    \quad q\in\mathbb{Z}_M^{\mathrm{c}},\quad
    \chi\in\{\mathrm{P},\mathrm{A}\}.                     \label{eq:zero_delay_invariance}
\end{align}
\end{Lemma}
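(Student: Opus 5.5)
The plan is to reduce the zero-delay cut to per-symbol energies and then invoke Parseval together with the unitarity of $\mathbf{U}$. First, set $k=0$ in the per-symbol correlations. In Eq. \eqref{eq:p_acf_1d} the index $(n+0)\bmod N$ reduces to $n$. In Eq. \eqref{eq:a_acf_1d} the upper limit $N-1-k$ becomes $N-1$. Consequently, under both modes
\begin{align}
    r_{\mathrm{P},m}[0]=r_{\mathrm{A},m}[0]=\sum_{n=0}^{N-1}|x_{n,m}|^{2}=\|\mathbf{x}_m\|^{2}.
\end{align}
This is the only point at which the two correlation modes need separate treatment, and at zero delay they coincide. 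Substituting into Eq. \eqref{eq:self_af} gives
\begin{align}
    \Gamma_{\chi}[0,q]=\sum_{m=0}^{M-1}\|\mathbf{x}_m\|^{2}e^{-j2\pi qm/M},
\end{align}
so the zero-delay Doppler cut is simply the $M$-point DFT of the sequence of per-symbol energies $\{\|\mathbf{x}_m\|^2\}_{m\in\mathbb{Z}_M}$. The same expression holds for $\bar{\Gamma}_{\chi}[0,q]$ with $\bar{\mathbf{x}}_m$ in place of $\mathbf{x}_m$.

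The key step is to show that $\|\bar{\mathbf{x}}_m\|^2=\|\mathbf{x}_m\|^2$ for every $m$ individually, not only in aggregate as in Eq. \eqref{eq:dbu_energy}. From Eq. \eqref{eq:dbu_frame}, $\bar{\mathbf{x}}_m=\mathbf{F}_N^{H}\mathbf{U}\mathbf{s}_m$. Because $\mathbf{F}_N$ is unitary and $\mathbf{U}$ is unitary by Eq. \eqref{eq:dbu_unitarity},
\begin{align}
    \|\bar{\mathbf{x}}_m\|^{2}=\mathbf{s}_m^{H}\mathbf{U}^{H}\mathbf{F}_N\mathbf{F}_N^{H}\mathbf{U}\mathbf{s}_m=\|\mathbf{s}_m\|^{2}=\|\mathbf{F}_N^{H}\mathbf{s}_m\|^{2}=\|\mathbf{x}_m\|^{2}.
\end{align}

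Equality of the two energy sequences term by term then gives Eq. \eqref{eq:zero_delay_invariance} for every $q\in\mathbb{Z}_M^{\mathrm{c}}$ and both $\chi$.

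The argument depends on the transformation acting within each OFDM symbol, that is, column-wise on $\mathbf{S}$ with the same $\mathbf{U}$ for all $m$. This holds for Eq. \eqref{eq:dbu_frequency_frame}. It also covers the sensing-subcarrier partition of Eq. \eqref{eq:data_unitary_action}, since $\mathbf{U}_d$ remains a per-symbol unitary there. If the transformation mixed across slow time, the per-symbol energies could change, so the proof should state this column-wise structure explicitly.

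There is no genuine analytical obstacle. The only care needed is to confirm that the A-ACF boundary effect vanishes at $k=0$, and to note that the input frame is held fixed, so the claim is deterministic rather than an expectation. The same reasoning explains why the phase-only and unitary pilot curves overlap exactly, since a phase-only design is a special diagonal unitary $\mathbf{U}_p$.
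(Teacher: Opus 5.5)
Your proposal is correct and follows essentially the same route as the paper's proof in Appendix C. At $k=0$ both modes reduce to the per-symbol energy $\|\mathbf{x}_m\|_2^2$, unitarity of $\mathbf{F}_N$ and $\mathbf{U}$ gives $\|\bar{\mathbf{x}}_m\|_2^2=\|\mathbf{s}_m\|_2^2=\|\mathbf{x}_m\|_2^2$ for each $m$, and the zero-delay Doppler cut is the DFT of this common energy sequence; your explicit column-wise remark is a sensible addition, though the argument only needs each column to be mapped by some unitary (possibly $m$-dependent, as with the per-symbol pilot phases $\theta_{\ell,m}$), not the same $\mathbf{U}$ for all $m$.
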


\begin{IEEEproof}
    Please refer to Appendix \ref{app:C} for detailed proof.
\end{IEEEproof}

The PO and UN pilot designs are both special cases of the unitary DBU-OFDM transformation in Lemma~\ref{lemma:lemma3}. Therefore, their zero-delay Doppler cuts coincide exactly with that of conventional OFDM in Fig.~\ref{fig:pilot_design_pacf_doppler_cut} and Fig.~\ref{fig:pilot_design_aacf_doppler_cut}. At a nonzero delay, unitarity alone does not preserve the delay-dependent correlations. Lemma~\ref{lemma:lemma2} provides a special exception by guaranteeing exact P-ACF invariance for the PO design over all delay-Doppler indices. In contrast, the A-ACF under the PO design and both correlation modes under the UN design may be reshaped. 

\subsection{Sensing-Subcarrier Budget and Ambiguity-Shaping Performance}\label{sec:IV-C}

The pilot-only results in Fig.~\ref{fig:pilot_design_delay_doppler_cuts} show that the prescribed pilot resources provide limited waveform-shaping freedom even after unitary optimization. We therefore reserve a subset of the original data subcarriers as dedicated sensing resources and jointly optimize the sensing and pilot blocks.

We evaluate the sensing-resource budgets $K_s\in\{0,64,128,230\}$. For each $K_s>0$, problem (P2) provides the initial sensing support, which is subsequently refined according to the complete self-ambiguity loss in problem (P2.1). The sensing symbols $\mathbf{S}_s$ and the available unitary blocks $\mathbf{U}_s$, $\mathbf{U}_c$, and $\mathbf{U}_p$ are jointly optimized under both correlation modes. Since each reserved sensing subcarrier removes one communication payload subcarrier, the ideal normalized communication payload rate is $\bar R_c=K_c/N_d=1-K_s/N_d$. The considered budgets correspond to $\bar R_c\in\{1,0.722,0.443,0\}$, respectively. This metric accounts only for resource reservation under fixed modulation and ideal channel estimation. The effect of pilot optimization on channel-estimation accuracy and the end-to-end communication rate is left for future work. The resulting ROI PSLRs are compared with conventional OFDM in Fig.~\ref{fig:pslr_sensing_budget}.
\begin{figure}[!t]
    \centering
    \subfloat[P-ACF.]{%
        \includegraphics[width=0.99\linewidth]{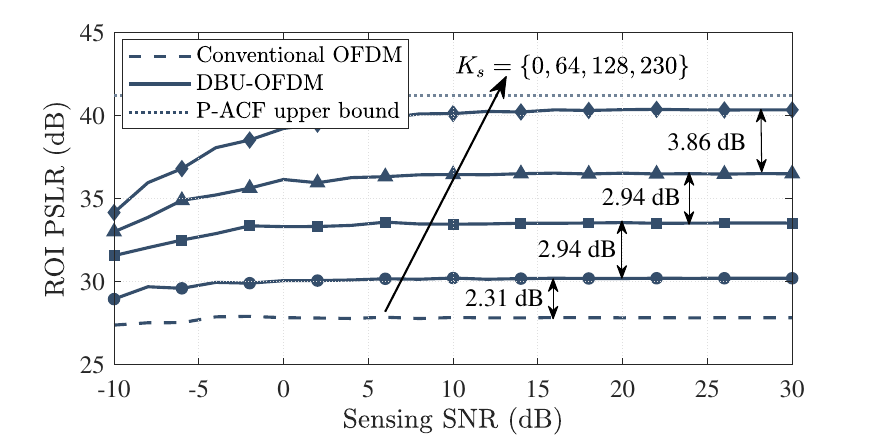}%
        \label{fig:pslr_snr_pacf}}
    \\
    \vspace{-10pt}
    \subfloat[A-ACF.]{%
        \includegraphics[width=0.99\linewidth]{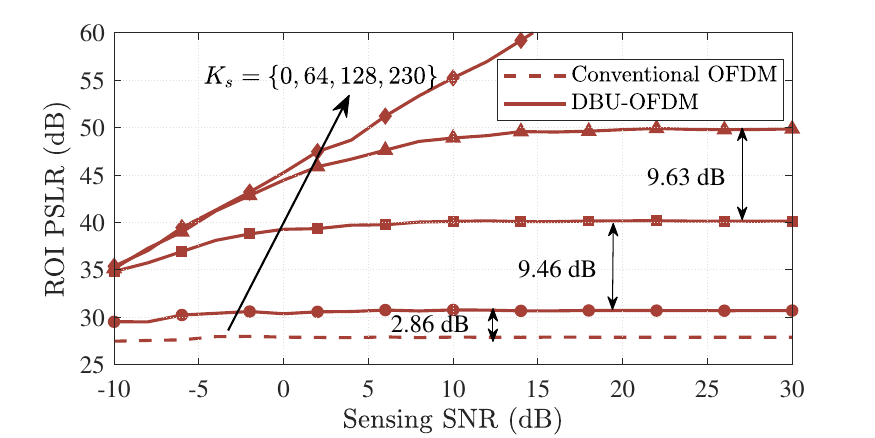}%
        \label{fig:pslr_snr_aacf}}
    \caption{ROI PSLR versus sensing SNR for DBU-OFDM with $K_s\in\{0,64,128,230\}$ under (a) the P-ACF mode and (b) the A-ACF mode. The upper bound is obtained from the relaxed P-ACF optimization problem $(\mathrm{P3})$.}
    \label{fig:pslr_sensing_budget}
\end{figure}

Fig.~\ref{fig:pslr_sensing_budget} shows that increasing $K_s$ improves the ROI PSLR under both correlation modes by enlarging the controllable waveform-shaping degrees of freedom. For most DBU-OFDM configurations, the PSLR initially increases with the sensing SNR and then gradually approaches saturation. This behavior reveals a transition from a noise-limited region to a waveform-sidelobe-limited region. At relatively low SNRs, the noise-induced correlation response limits the observed PSLR. As the SNR increases, the residual waveform sidelobes become the dominant limitation. In contrast, the conventional OFDM baselines exhibit only a weak PSLR improvement over the considered SNR range. This indicates that the prominent sidelobes induced by the practical OFDM resource structure already dominate the maximum ROI sidelobe response even when the noise power is relatively high. Therefore, directly suppressing these sidelobes is important for improving the sensing performance.

Moreover, the two correlation modes exhibit different ambiguity-shaping capabilities. Under the P-ACF mode, the improvement gradually saturates as $K_s$ increases even when all data subcarriers are reserved for sensing. To explain this behavior, we define $p_{\ell}=M^{-1}\sum_{m=0}^{M-1}|\bar S_{\ell,m}|^2$ as the average power of the $\ell$-th subcarrier. Applying DFT orthogonality to the periodic correlation gives
\begin{align}
    &\bar\Gamma_{\mathrm{P}}[k,0]
    =M\sum_{\ell=0}^{N-1}p_{\ell}e^{j2\pi\ell k/N}.
                                                                  \label{eq:pacf_zero_doppler_power}
\end{align}
The zero-Doppler P-ACF therefore depends only on the nonnegative average powers $p_{\ell}$. Increasing $K_s$ introduces additional power-allocation variables but provide limited flexibility.

To obtain a reference upper bound, we consider $K_s=N_d$ and $K_c=0$, and relax the structured waveform variables into the average-power vector $\mathbf{p}=[p_0,\ldots,p_{N-1}]^T$. The resulting problem is
\begin{subequations}\label{eq:pacf_power_upper_bound_problem}
\begin{align}
    (\mathrm{P3}):\,
    &\underset{\mathbf{p},t}{\operatorname{min}}
      \quad t
      \tag{\theparentequation}\label{eq:pacf_power_upper_bound_objective}\\
    &\operatorname{s.t.}\quad
      \left|\sum_{\ell=0}^{N-1}
      p_{\ell}e^{j2\pi\ell k/N}\right|\leq t,
      \quad k\in\Omega_k
      \tag{\theparentequation a}\label{eq:pacf_power_upper_bound_sidelobe}\\
    &\hphantom{\operatorname{s.t.}\quad}
      \sum_{\ell\in\bm{\mathcal{K}}_s}p_{\ell}=K_s,
      \quad
      \sum_{\ell\in\bm{\mathcal{K}}_p}p_{\ell}=N_p
      \tag{\theparentequation b}\label{eq:pacf_power_upper_bound_energy}\\
    &\hphantom{\operatorname{s.t.}\quad}
      p_{\ell}\geq0,\ \ell\in\bm{\mathcal{K}}_s\cup\bm{\mathcal{K}}_p,
      \,
      p_{\ell}=0,\ \ell\in\bm{\mathcal{K}}_0.
      \tag{\theparentequation c}\label{eq:pacf_power_upper_bound_support}
\end{align}
\end{subequations}
Problem (P3) retains only the zero-Doppler delay slice and allows arbitrary nonnegative power distributions under the total-power constraints. It therefore provides an optimistic upper bound for the complete DBU-OFDM design. Under the parameters in Table~\ref{tab:numerical_parameters} with $K_s=230$, the solution of problem (P3) gives a P-ACF PSLR upper bound of $41.25$ dB, as shown in Fig.~\ref{fig:pslr_snr_pacf}. 

In contrast, the finite summation interval in the A-ACF definition in Eq. \eqref{eq:a_acf_1d} retains the cross-subcarrier terms. Both the amplitudes and phases of the sensing and pilot symbols therefore remain available for sidelobe shaping. This greater design freedom explains the substantially greater A-ACF improvement in Fig.~\ref{fig:pslr_snr_aacf}, where the $K_s=230$ curve continues to increase approximately linearly over the evaluated SNR range.

The corresponding self-ambiguity surfaces for $K_s=128$ and $K_s=230$ are compared in Fig.~\ref{fig:ambiguity_surfaces_sensing_budget}. These surface views further illustrate how increasing the sensing-resource budget suppresses the dominant sidelobes over the selected delay-Doppler region.

\begin{figure}[!t]
    \centering
    \subfloat[P-ACF, $K_s=128$.]{%
        \includegraphics[width=0.48\linewidth]{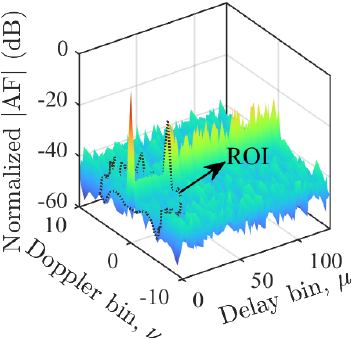}%
        \label{fig:ambiguity_surface_pacf_ks128}}
    \hfill
    \subfloat[P-ACF, $K_s=230$.]{%
        \includegraphics[width=0.48\linewidth]{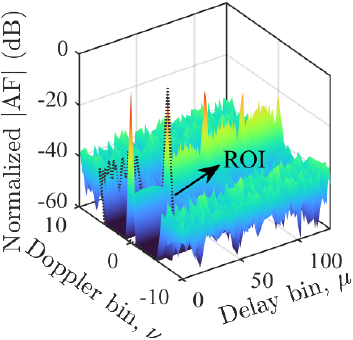}%
        \label{fig:ambiguity_surface_pacf_ks230}}
    \\
    \vspace{-8pt}
    \subfloat[A-ACF, $K_s=128$.]{%
        \includegraphics[width=0.48\linewidth]{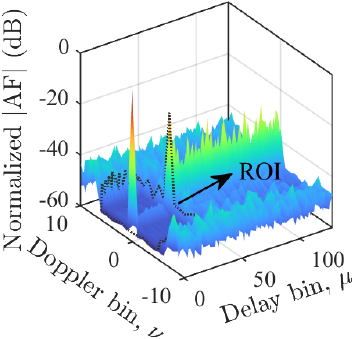}%
        \label{fig:ambiguity_surface_aacf_ks128}}
    \hfill
    \subfloat[A-ACF, $K_s=230$.]{%
        \includegraphics[width=0.48\linewidth]{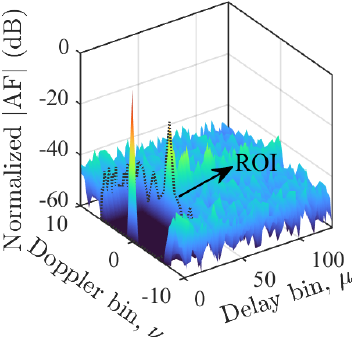}%
        \label{fig:ambiguity_surface_aacf_ks230}}
    \caption{Normalized two-dimensional self-ambiguity functions of the optimized DBU-OFDM waveforms for $K_s=128$ and $K_s=230$. (a) and (b) show the P-ACF results, while panels (c) and (d) show the A-ACF results.}
    \label{fig:ambiguity_surfaces_sensing_budget}
\end{figure}

Fig.~\ref{fig:ambiguity_surfaces_sensing_budget} confirms that increasing $K_s$ provides limited additional P-ACF suppression along the zero-Doppler delay slice, whereas the A-ACF mode yields substantially lower sidelobes throughout the selected ROI, particularly when $K_s=230$. The optimized surfaces also show localized suppression, with some sidelobes outside the ROI becoming more pronounced as the dominant ROI sidelobes decrease. Thus, this sidelobe redistribution is consistent with the observation in Section~\ref{sec:IV-B} that ROI-oriented training reshapes the sidelobes locally rather than uniformly suppressing them over the complete delay-Doppler plane.


\subsection{Ablation Study and Learned Block Structure}\label{sec:IV-D}

We further conduct an A-ACF ablation study at $K_s=128$ and $K_c=102$. Table~\ref{tab:aacf_ablation} compares the noiseless ROI PSLRs for three support initializations. The fixed-support configurations retain the selected subcarrier locations during waveform training, whereas the refined-support configurations further optimize these locations using the complete ROI loss.

\begin{table}[!t]
    \centering
    \caption{A-ACF ablation results with $K_s=128$ and $K_c=102$.}
    \label{tab:aacf_ablation}
    \renewcommand{\arraystretch}{1.08}
    \setlength{\tabcolsep}{3pt}
    \footnotesize
    \begin{tabular}{@{}llcrr@{}}
        \toprule
        Initialization & Support & $\mathbf{U}_c$ & PSLR (dB) & Gain (dB) \\
        \midrule
        Conventional OFDM & -- & -- & 27.93 & 0.00 \\
        \midrule
        Array-factor & Fixed & Identity & 49.50 & +21.57 \\
        Uniform & Fixed & Identity & 37.88 & +9.95 \\
        Coprime & Fixed & Identity & 37.94 & +10.01 \\
        \midrule
        Array-factor & Refined & Identity & 51.01 & +23.08 \\
        Uniform & Refined & Identity & 51.06 & +23.13 \\
        Coprime & Refined & Identity & 51.11 & +23.18 \\
        Array-factor & Refined & Trainable & 50.97 & +23.03 \\
        \bottomrule
    \end{tabular}
\end{table}

With fixed supports, array-factor initialization obtained via $\left(\text{P2}\right)$ achieves a PSLR of $49.50$ dB, exceeding the better of the uniform and coprime results by $11.56$ dB. After support refinement, the three initializations achieve PSLRs between $51.01$ and $51.11$ dB. These results show that array-factor initialization provides an effective starting support, while subsequent refinement reduces the sensitivity to initialization.

Moreover, the array-factor refined configurations with fixed and trainable $\mathbf{U}_c$ achieve similar PSLRs. To examine this results, Fig.~\ref{fig:aacf_unitary_blocks} shows $\mathbf{U}_c$ and $\mathbf{U}_s$, respectively. Applying Eq. \eqref{eq:monomiality_ratio} with dimensions $K_c$ and $K_s$ gives $\rho_{\mathrm{mon}}(\mathbf{U}_c)=0.9793$ and $\rho_{\mathrm{mon}}(\mathbf{U}_s)=0.0863$, respectively. Thus, we also have $\mathbf{U}_c\approx \mathbf{D}_c\mathbf{P}_c$, where $\mathbf{D}_c$ is a diagonal phase matrix and $\mathbf{P}_c$ is a permutation matrix. It mainly reorders and rotates communication symbols with limited mixing, whereas $\mathbf{U}_s$ retains a dense mixing structure.

\begin{figure}[!t]
    \centering
    \subfloat[$\mathbf{U}_c$, $K_c=102$.]{%
        \includegraphics[width=0.43\linewidth]{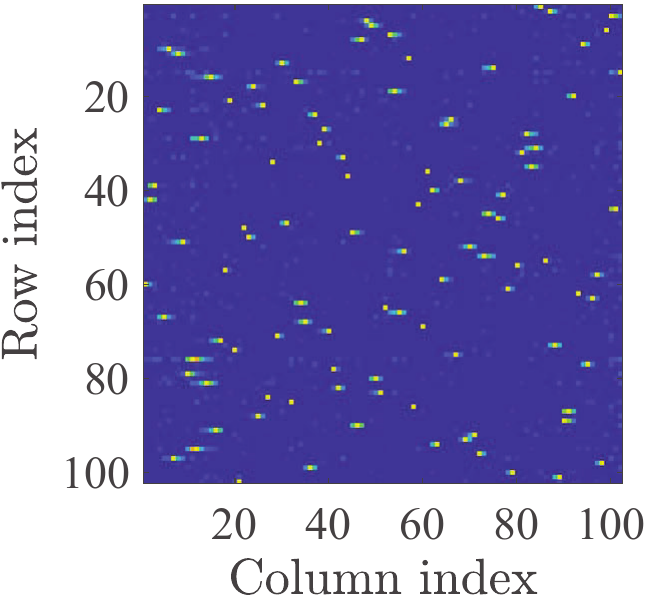}%
        \label{fig:aacf_communication_block}}
    \hfill
    \subfloat[$\mathbf{U}_s$, $K_s=128$.]{%
        \includegraphics[width=0.53\linewidth]{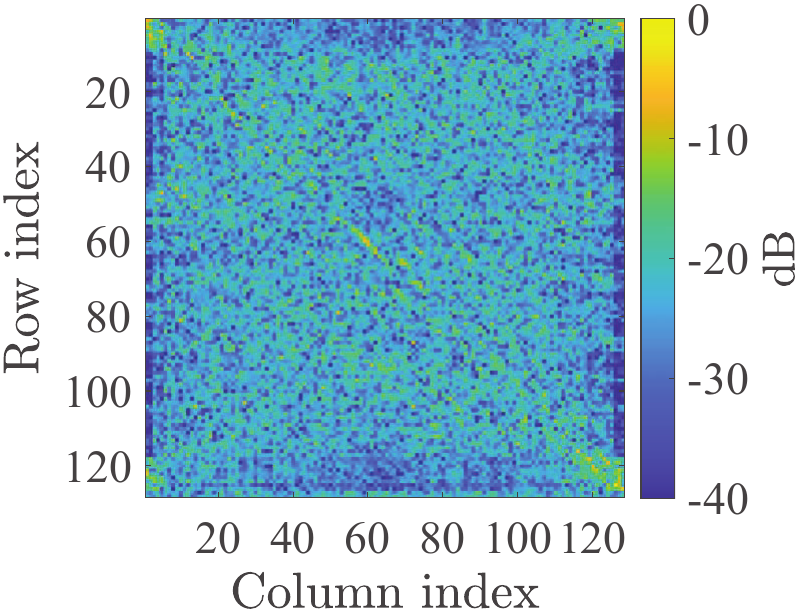}%
        \label{fig:aacf_sensing_block}}
    \caption{Learned communication and sensing blocks for the A-ACF array-factor refined configuration with trainable $\mathbf{U}_c$. Their monomiality ratios are $0.9793$ and $0.0863$, respectively.}
    \label{fig:aacf_unitary_blocks}
\end{figure}

Taken together, these observations support $\mathbf{U}_c=\mathbf{I}_{K_c}$ as a practical simplification under the considered configuration, preserving the original communication-symbol mapping while retaining effective ROI sidelobe suppression through the sensing and pilot blocks. However, pilot optimization may still affect channel estimation. A rigorous characterization of this design choice and its end-to-end communication impact remains for future work due to the limitation of space.

\section{Conclusions}\label{sec:conclusion}

In this paper, we developed ROI-oriented DBU-OFDM to suppress sidelobes in practical OFDM frames while preserving resource isolation, energy, invertibility, and DFT-based communication processing under the stated conditions. We combined unitary waveform shaping with sensing-support allocation and established the representation capability and invariance properties that characterize the available design freedom.

Under full resource occupancy, our experiments corroborated the established P-ACF optimality and supported the conjectured global optimality of OFDM for the considered zero-Doppler A-ACF delay-slice PSLR problem within the unitary waveform class. With practical resources, pilot optimization improved the ROI PSLR by over $2$ dB, while dedicated sensing subcarriers yielded gains of several to tens of dB. Moreover, A-ACF offered greater shaping freedom than P-ACF under the considered configuration.

Future work will jointly evaluate pilot optimization, channel-estimation accuracy, and achievable communication rate and extend the formulation to fractional delay-Doppler parameters and time-varying channels with non-negligible intrasymbol Doppler.

{\appendices
\section{Proof of Lemma \ref{lemma:1}}\label{app:A}
    This result is derived from the standard Householder decomposition in \cite{ivanov2006engineering}. We include its recursive construction to explain why $N_{\vartheta}-1$ reflection stages are sufficient and to establish the correspondence with Eq. \eqref{eq:householder_parameterization}.

Consider a unit-norm vector $\mathbf{a}\in\mathbb{C}^{r\times1}$, its first entry is denoted by $a_1=[\mathbf{a}]_1$, and $\mathbf{e}_{1}=[1,0,\ldots,0]^{T}$ denotes the coordinate vector whose first entry is one. We select a unit-modulus scalar $\eta$ such that $\eta^{*}a_1=-|a_1|$. If $a_1=0$, any unit-modulus $\eta$ can be used. Then, the reflection vector is defined as
\begin{align}
    &\mathbf{v}=\mathbf{a}-\eta\mathbf{e}_{1}.               \label{eq:householder_reflection_vector}
\end{align}
The corresponding Householder matrix is
\begin{align}
    &\widehat{\mathbf{H}}
     =\mathbf{I}_{r}
      -2\frac{\mathbf{v}\mathbf{v}^{H}}
               {\mathbf{v}^{H}\mathbf{v}}.                    \label{eq:householder_local_reflection}
\end{align}
Since $\mathbf{a}^{H}\mathbf{a}=1$ and $\eta^{*}a_1=-|a_1|$, the inner products required to evaluate the reflection are
\begin{align}
    &\mathbf{v}^{H}\mathbf{a}=1+|a_1|,\qquad
     \mathbf{v}^{H}\mathbf{v}=2(1+|a_1|).                    \label{eq:householder_inner_products}
\end{align}
Thus we have $\mathbf{v}^{H}\mathbf{v}\neq 0$. Substituting Eq. \eqref{eq:householder_inner_products} into Eq. \eqref{eq:householder_local_reflection} gives
\begin{align}
    &\widehat{\mathbf{H}}\mathbf{a}
     =\mathbf{a}-2\mathbf{v}
      \frac{\mathbf{v}^{H}\mathbf{a}}
           {\mathbf{v}^{H}\mathbf{v}}
     =\mathbf{a}-\mathbf{v}
     =\eta\mathbf{e}_{1}.                                    \label{eq:householder_vector_mapping}
\end{align}
Thus, a standard Householder reflection maps any unit-norm vector to a vector whose only nonzero entry is the first element.

We initialize the decomposition with $\mathbf{U}_{\vartheta}^{(0)}=\mathbf{U}_{\vartheta}^{H}$. Applying the construction in Eq. \eqref{eq:householder_vector_mapping} to the first column of $\mathbf{U}_{\vartheta}^{(0)}$ gives
\begin{align}
    &\mathbf{H}_{\vartheta,1}\mathbf{U}_{\vartheta}^{(0)}
     =\begin{bmatrix}
       \eta_{\vartheta,1}
       &\mathbf{0}_{1\times(N_{\vartheta}-1)}\\
       \mathbf{0}_{(N_{\vartheta}-1)\times1}
       &\mathbf{U}_{\vartheta}^{(1)}
      \end{bmatrix},                                        \label{eq:householder_recursive_block}
\end{align}
where $\mathbf{U}_{\vartheta}^{(1)}$ denotes the $(N_{\vartheta}-1)\times(N_{\vartheta}-1)$ trailing block. Since both $\mathbf{H}_{\vartheta,1}$ and $\mathbf{U}_{\vartheta}^{(0)}$ are unitary, this trailing block is also unitary. The same construction is then applied recursively to the trailing block, with $\mathbf{U}_{\vartheta}^{(i)}$ denoting the unitary block remaining after the $i$-th reflection. Each reflection isolates one unit-modulus diagonal element and reduces the dimension of the trailing unitary block by one.

After $N_{\vartheta}-1$ steps, $\mathbf{U}_{\vartheta}^{(N_{\vartheta}-1)}$ is a $1\times1$ unitary matrix and therefore equals a unit-modulus scalar $\eta_{\vartheta,N_{\vartheta}}$. Consequently, the recursive decomposition produces the diagonal phase matrix
\begin{align}
    &\mathbf{H}_{\vartheta,N_{\vartheta}-1}\cdots
     \mathbf{H}_{\vartheta,1}\mathbf{U}_{\vartheta}^{H}
     =\mathbf{\Phi}_{\vartheta},                             \label{eq:householder_diagonalization}
\end{align}
where $\mathbf{\Phi}_{\vartheta}=\operatorname{diag}(\eta_{\vartheta,1},\ldots,\eta_{\vartheta,N_{\vartheta}})$ is a diagonal phase matrix. Taking the Hermitian transpose of Eq. \eqref{eq:householder_diagonalization} and using the Hermitian and involutory properties of each Householder matrix yields
\begin{align}
    &\mathbf{U}_{\vartheta}
     =\mathbf{\Phi}_{\vartheta}^{H}
      \mathbf{H}_{\vartheta,N_{\vartheta}-1}\cdots
      \mathbf{H}_{\vartheta,1},                              \label{eq:householder_universality}
\end{align}
 Therefore, setting $\mathbf{D}_{\vartheta}=\mathbf{\Phi}_{\vartheta}^{H}$ gives the parameterization in Eq. \eqref{eq:householder_parameterization}, and a depth budget satisfying $K_{\vartheta}\geq N_{\vartheta}-1$ is sufficient for complete unitary representation.

 \section{Proof of Lemma \ref{lemma:lemma2}}\label{app:B}
 Under the phase-only pilot design, the frequency-domain symbols satisfy $\bar S_{\ell,m}=e^{j\theta_{\ell,m}}S_{\ell,m}$ for $\ell\in\bm{\mathcal K}_{p}$ and remain unchanged on the data and null subcarriers.

Substituting the normalized IDFT in Eq. \eqref{eq:ofdm_modulation} into the P-ACF definition in Eq. \eqref{eq:p_acf_1d} gives
\begin{align}
    &r_{\mathrm{P},m}[k]
    =\frac{1}{N}
      \sum_{\ell=0}^{N-1}\sum_{p=0}^{N-1}
      S_{\ell,m}^{*}S_{p,m}
      e^{j2\pi pk/N}
      \sum_{n=0}^{N-1}e^{j2\pi(p-\ell)n/N}.                 \label{eq:pacf_idft_substitution}
\end{align}
The inner summation represents the orthogonality relation between two DFT basis vectors and satisfies
\begin{align}
    &\sum_{n=0}^{N-1}e^{j2\pi(p-\ell)n/N}
    =
    \begin{cases}
        N, &p=\ell,\\
        0, &p\neq\ell.
    \end{cases}                                               \label{eq:dft_basis_orthogonality}
\end{align}
Therefore, the P-ACF can be equivalently expressed as
\begin{align}
    &r_{\mathrm{P},m}[k]
    =\sum_{\ell=0}^{N-1}|S_{\ell,m}|^2
      e^{j2\pi\ell k/N}.                                    \label{eq:pacf_frequency_representation}
\end{align}

Since the phase-only adjustment preserves the magnitude of every frequency-domain symbol, we have
\begin{align}
    &|\bar S_{\ell,m}|^2
    =\left|e^{j\theta_{\ell,m}}S_{\ell,m}\right|^2
    =|S_{\ell,m}|^2,
    \, \ell\in\mathbb{Z}_N,\, m\in\mathbb{Z}_M.      \label{eq:phase_only_magnitude_invariance}
\end{align}
Substituting this equality into the frequency-domain P-ACF in Eq. \eqref{eq:pacf_frequency_representation} gives
\begin{align}
    &\bar r_{\mathrm{P},m}[k]
    =\sum_{\ell=0}^{N-1}|\bar S_{\ell,m}|^2
      e^{j2\pi\ell k/N}
      \nonumber\\
    &\phantom{\bar r_{\mathrm{P},m}[k]}
    =\sum_{\ell=0}^{N-1}|S_{\ell,m}|^2e^{j2\pi\ell k/N}
    =r_{\mathrm{P},m}[k],\quad k\in\mathbb{Z}_N.           \label{eq:phase_only_pacf_invariance}
\end{align}
Substituting this per-symbol equality into Eq. \eqref{eq:self_af} gives
\begin{align}
    &\bar\Gamma_{\mathrm{P}}[k,q]
    =\sum_{m=0}^{M-1}\bar r_{\mathrm{P},m}[k]e^{-j2\pi qm/M}
    \nonumber\\
    &\phantom{\bar\Gamma_{\mathrm{P}}[k,q]}
    =\sum_{m=0}^{M-1}r_{\mathrm{P},m}[k]e^{-j2\pi qm/M}
    =\Gamma_{\mathrm{P}}[k,q].                              \label{eq:phase_only_pacf_af_invariance}
\end{align}
Therefore, the complete P-ACF self-ambiguity function, including all its sidelobes, remains unchanged.

 \section{Proof of Lemma \ref{lemma:lemma3}}\label{app:C}
At $k=0$, the P-ACF and A-ACF definitions both reduce to
\begin{align}
    &r_{\chi,m}[0]
    =\sum_{n=0}^{N-1}|x_{n,m}|^2
    =\|\mathbf{x}_m\|_2^2
    =\|\mathbf{s}_m\|_2^2,
    \quad \chi\in\{\mathrm{P},\mathrm{A}\}.               \label{eq:zero_delay_symbol_energy}
\end{align}
For an arbitrary unitary DBU-OFDM transformation, the corresponding zero-delay correlation value satisfies
\begin{align}
    \bar r_{\chi,m}[0]
    &=\|\bar{\mathbf{x}}_m\|_2^2
    =\|\mathbf{F}_N^H\mathbf{U}\mathbf{s}_m\|_2^2
    =\|\mathbf{U}\mathbf{s}_m\|_2^2
    =\|\mathbf{s}_m\|_2^2\nonumber\\
    &=r_{\chi,m}[0]. \label{eq:dbu_symbol_energy_invariance}
\end{align}
Therefore, DBU-OFDM and conventional OFDM have the same per-symbol energy sequence. Since the zero-delay Doppler cut in Eq. \eqref{eq:self_af} is the DFT of this sequence, Eq. \eqref{eq:zero_delay_invariance} follows.

}
\bibliography{Reference}
\end{document}